\DeclareMathOperator{\Tr}{Tr}
\newtheorem*{theorem*}{Theorem}
\newtheorem{lemma}{Lemma}
\newtheorem*{lemma*}{Lemma}
\newtheorem*{problem*}{Problem}
\renewcommand{\paragraph}[1]{\vspace{5.5pt}\noindent\textbf{#1}---\ignorespaces}
\begin{document}
\preprint{LA-UR-24-33001}

\title{Bootstrapping time-evolution in quantum mechanics}

\author{Scott Lawrence}
\email{srlawrence@lanl.gov}
\affiliation{Los Alamos National Laboratory Theoretical Division T-2, Los Alamos, NM 87545, USA}
\author{Brian McPeak}
\email{bmmcpeak@syr.edu}
\affiliation{Syracuse University, Crouse Dr, Syracuse, NY 13210}
\affiliation{McGill University, 3600 Rue University, Montréal, QC H3A 2T8, Canada
}
\author{Duff Neill}
\email{dneill@lanl.gov}
\affiliation{Los Alamos National Laboratory Theoretical Division T-2, Los Alamos, NM 87545, USA}
\date{\today}

\begin{abstract}
	We present a method for obtaining a hierarchy of rigorous bounds on the time-evolution of a quantum mechanical system from an arbitrary initial state, systematically generalizing Mandelstam-Tamm-like relations. For any fixed level in the hierarchy, the bounds are tightest after short time-evolution and gradually loosen over time; we present evidence that for any fixed amount of time-evolution, the bounds can be made arbitrarily tight by moving up in the hierarchy. The computational effort to obtain the bounds scales polynomially with the number of degrees of freedom in the system being simulated. We demonstrate the method on both a single anharmonic oscillator and a system of two coupled anharmonic oscillators.
\end{abstract}

\maketitle

\paragraph{Introduction}Fully non-perturbative methods for studying quantum systems are few and far between. Monte Carlo methods have proven exceptionally powerful, but are unable to probe out-of-equilibrium physics due to a sign problem. In this Letter we use Lagrangian optimization to obtain rigorous and convergent bounds on the time-evolution of quantum systems. These bounds may be thought of as a systematic generalization of Mandelstam-Tamm-like relations~\cite{Mandelstam1991,1972CMaPh..28..251L,PhysRevLett.72.3439,PhysRevLett.103.240501,PhysRevA.106.042436,PhysRevX.12.011038,2024arXiv240904544B}, in the form of a convex optimization program that searches over the space of constraints on a time-dependent observable, arising from uncertainty relations.


The method we present is broadly similar to previous bootstrap methods, which have most prominently been used to constrain critical exponents of conformal field theories~\cite{Rattazzi:2008pe,Kos:2016ysd,Simmons-Duffin:2015qma} (see~\cite{Poland:2018epd} for a review). The S-matrix bootstrap program~\cite{chew1961s} was revived in recent years using similar techniques to study S-matrices~\cite{Paulos:2016but, Paulos:2016fap, Paulos:2017fhb} and EFTs~\cite{Caron-Huot:2020cmc,Tolley:2020gtv, Sinha:2020win, Arkani-Hamed:2020blm}. In this Letter our focus is on constraints on quantum mechanical systems. Inspired by the application of positivity to matrix models \cite{Lin:2020mme, Kazakov:2021lel}, these methods have been applied to constraining ground-state physics in quantum mechanics~\cite{Berenstein:2021dyf,Berenstein:2021loy,Nancarrow:2022wdr}, matrix quantum mechanics~\cite{Han:2020bkb,Lin:2023owt,Lin:2024vvg}, lattice systems~\cite{PhysRevLett.108.200404,Anderson:2016rcw,Lawrence:2021msm,Lawrence:2022vsb,Kazakov:2022xuh,Cho:2022lcj, Cho:2023ulr, Kazakov:2024ool, Li:2024wrd}, and systems at finite temperature~\cite{Fawzi:2023fpg,Cho:2024kxn}. Recent work has also explored constraints on hydrodynamics~\cite{heller2023hydrohedron}, and Hilbert space positivity has been combined with lattice data to constrain real-time linear response~\cite{Lawrence:2024hjm}. 

In this Letter we show how to systematically derive time-dependent bounds on expectation values in a quantum mechanical system. Given a set of known expectation values at an initial time $t = 0$, we give an algorithm for obtaining bounds on various expectation values at any later time $T$. The inputs are fairly minimal: we require 
positivity of the norm of every state in the Hilbert space, some knowledge of the initial state, and the Heisenberg equations of motion. To obtain an algorithm that can be run on a computer of finite size, we make two simplifications: we truncate to a finite set of operators, and then, after passing to the dual optimization problem, we discretize to a finite-dimensional space of functionals. As the truncation and discretization are lifted, meaning that we include larger sets of operators and functionals, the bounds tighten. We present evidence that the bounds can be made arbitrarily tight. 


\paragraph{Primal problem}
As with other bootstrap-like methods, the heart of our approach is the definition of a particular convex space. Choose a basis of $N$ operators acting on the Hilbert space: $\{\mathcal O_1, \ldots, \mathcal O_N\}$. Our convex space is constructed as a subset of the space of Hermitian matrix-valued functions $M : [0,T] \rightarrow \mathbb C^{N \times N}$. Each such function represents a collection of time-dependent expectation values: $M_{ij}(t) \equiv \langle \mathcal O_i^\dagger(t) \mathcal O_j(t) \rangle$. The constraints on such functions are described below. By convex optimization we can efficiently obtain upper and lower bounds on any linear function of the space; in particular, bounding an expectation value $M_{ij}(T)$.

We will make use of four types of constraints on $M(t)$. The first is an inequality. Because the norm of any state is non-negative, $M(t) \succeq 0$ is positive semi-definite for any $t$. Equivalently, we have $\langle \mathcal O^\dagger \mathcal O\rangle \ge 0$ for all operators $\mathcal O$. 

The remaining constraints are affine equalities. The second constraint comes from the fact that although the $\mathcal O_i$ are linearly independent, the $N^2$ operators $\mathcal O^\dagger_i \mathcal O_j$ are typically not: they may be related by the operator algebra. For each matrix $A_{ij}$ such that $0 = \sum_{ij} A_{ij} \mathcal O_i^\dagger \mathcal O_j$, we impose $\Tr A M(t) = 0$ at all times $t$. The third constraint is our knowledge of expectation values at $t=0$.
We will assume that $M(t=0)$ is known exactly\footnote{Essentially the same method can be used if only some expectation values are known, or even if all that is available are some convex bounds on $M(t=0)$.}. 

The fourth and final constraint is the Heisenberg equations of motion, which relate the time-derivatives of some expectation values to other expectation values:
\begin{equation}
	\frac{d}{dt} \langle \mathcal O(t) \rangle = i \langle [H, \mathcal O] (t) \rangle
	\text.
\end{equation}
Together these three constraints define, for any $N$-operator basis $\{\mathcal O_i\}$, a convex space of functions $[0,T] \rightarrow \mathbb C^{N \times N}$.

Subject to these constraints, we will determine the minimum possible value of some expectation value $\langle \mathcal O_*(T)\rangle$. Assuming this expectation value appears in the matrix $M$, this is equivalent to minimizing $\Tr O M(T)$ for some matrix $O$. This optimization problem may be summarized as follows:
\begin{equation}
 	\label{eq:primalop}
	\begin{split}
		\textbf{minimize } &\Tr O M(T)\\
		\textbf{ subject to }&M(t)\succeq 0\\
		&\Tr A^{(i)} M(t) = 0\\
		&\Tr B^{(j)} M(0) = b_j \\
		&\Tr \left(D^{(k)} - C^{(k)} \frac{d}{dt}\right)M(t) = 0
		\text.
	\end{split}
\end{equation}
Without loss of generality, we will assume for the remainder of this work that the matrices $O,A^\bullet,B^\bullet,C^\bullet$ are all Hermitian, with $b_\bullet$ real.

The minimum of the convex program~\eqref{eq:primalop} corresponds to the minimal value of $\langle \mathcal O(T)\rangle$ obtainable consistent with the constraints detailed above, and thus corresponds to a lower bound on that expectation value. An upper bound can be obtained by re-solving the convex program with an objective of opposite sign. As written, however, this program cannot be solved on a computer of finite size, as the space of functions $M(t)$ is infinite-dimensional.

\paragraph{Anharmonic oscillator}
Before discussing the computational solution of the convex program \eqref{eq:primalop}, let us first illustrate with the example of an anharmonic oscillator. (The example of the \emph{harmonic} oscillator is further explored in Appendix~\ref{app:examples}.) This system's Hamiltonian is 
\begin{equation}
    \label{eq:AHOham}
    \hat H = \frac 1 2 \hat p^2 + \frac{\omega^2}{2} \hat x^2 + \frac{\lambda}{4} \hat x^4
    \text.
\end{equation}
We will take our basis of operators to be $\{\mathcal O_1,\mathcal O_2,\mathcal O_3,\mathcal O_4\} =\{ \hat I, \hat x, \hat p, \hat x^2\} $. 

We may now give concrete examples of the various matrices appearing in the convex program~\eqref{eq:primalop}. To obtain a lower bound on $\langle \hat x(T)\rangle$, we take the objective matrix to be $O_{ij} = \frac{1}{2}( \delta_{1 i} \delta_{2j} + \delta_{1 j} \delta_{2i})$. In this way the matrix $O$ may be said to ``extract'' the expectation of $\hat x$ from the matrix $M$.

Since $\hat x$ is Hermitian, we have an algebraic relation $\hat x - \hat x^\dagger = 0$, implying that $M_{12} = M_{21}$. In the context of the convex program, this constraint may be represented by the choice $A^{(1)}_{ab} = i \delta_{1a}\delta_{2b} - i \delta_{1b} \delta_{2a}$. Since $M$ (being a Hermitian, $4 \times 4$ matrix) lives in a space of $16$ real dimensions, but contains only $11$ linearly independent expectation values, there are $5$ matrices $A^{(i)}$ in total.

Initial values are specified by matrices $B^{(j)}$ (extracting certain expectation values) in combination with real numbers $b_j$. In this work we assume that all expectation values are known at time $t=0$, and therefore $M(0) = M^{(0)}$ is a known matrix. We therefore take a complete basis of $N^2$ Hermitian matrices $B^{(j)}$, with the initial values given by $b_j = \Tr B^{(j)} M_0$.

Finally, by computing commutators with the Hamiltonian, we may derive equations of motion for all elements of $M$. For example we have
$\frac{d \hat x}{dt} = \hat p$,
which may be expressed by defining
$C^{(1)}_{ab} = \frac 1 2 \left(\delta_{a1} \delta_{b2} + \delta_{a2}\delta_{b1}\right)$ and $D^{(1)}_{ab} = \frac 1 2 \left(\delta_{a1} \delta_{b3} + \delta_{a3}\delta_{b1}\right)$. Note that although there is an equation of motion for $\hat 2 \hat H = \hat p^2 + \omega^2 \hat x^2 + \frac{\lambda}{2}\hat x^4$, there is no equation of motion for $\hat p^2$ alone: this would involve operators whose expectation values are not extractable from the $4 \times 4$ matrix $M$.

For many quantum-mechanical systems---including the oscillator examined here and infinite spin systems---there are an infinite number of operators. In order to obtain a problem of a finite size, we must \emph{truncate} to a finite basis of operators; in this case we truncated to a basis of size $N=4$. Bounds obtained from the truncated problem hold true for the original system, as constraints have only been removed, not added. One hopes that by taking a sequence of larger truncations, the lower and upper bounds each converge to the physical answer. As we will see below, this appears to be true in practice.

This truncation is necessary to obtain a finite convex program, but not sufficient. We cannot in practice optimize over all possible functions $\langle x(t) \rangle, \ldots$ because this space is infinite-dimensional. An obvious approach is to discretize time, representing each function by its value at a finite set of points, and replacing the Heisenberg equations of motion by a finite-differencing approximation. However, once these replacements have been made, bounds obtained in this fashion do not necessarily hold for the original problem. Note the contrast with truncation: bounds of the truncated problem may be undesirably loose, but are at least guaranteed to be true. 

As it turns out, it is possible to discretize time without sacrificing the rigor of the bounds. To do this we must first pass to the \emph{Lagrange dual} of the convex program~\eqref{eq:primalop}.

\paragraph{Dual problem} To understand the dual problem we must first write a Lagrange functional for our convex program~\eqref{eq:primalop} (henceforth termed the \emph{primal} problem). This is given by
\begin{align}\label{eq:lagrange-functional}
	L[M, \Lambda] = \Tr OM(T) - \int_{0}^T \Lambda(t) M(t)\,dt\text.
\end{align}
Here $\Lambda$ is, like $M$, a matrix-valued function of $t$. For any fixed $M$, it is straightforward to evaluate the maximum of $L[M,\Lambda]$ subject to the requirement that $\Lambda$ be positive semi-definite at all times, finding:
\begin{align}
\max_{\Lambda \succeq 0} L(M, \Lambda) \ = \ 
	\begin{cases} 
		\Tr OM & M \succeq 0 \\
		\infty & \text{otherwise} 
	\end{cases}
\end{align}
As a result, the primal optimum may be written as a maximization, followed by a minimization, of the Lagrangian:
\begin{align}
	p^* = \min_{M} \max_{\Lambda \succeq 0} L(M, \Lambda)
	\text.
\end{align}
Above, only the maximization incorporates an explicit positivity constraint. The minimization over $M$ is constrained by the affine relations described above ($\Tr A^{(i)} M = 0$ and so on). The positivity of $M$ is forced by the fact that, for non-positive semi-definite $M$, the inner maximization will be unbounded above.

The \emph{dual problem} is obtained by swapping the order of the two optimizations:
\begin{align}
	d^* =  \max_{\Lambda \succeq 0} \min_{M} L[M, \Lambda]\text.
\end{align}
This problem is of practical use when the inner optimization can be performed analytically---we will accomplish this for our case below. The result is $d^* = \max_{\Lambda \succeq 0} g[\Lambda]$, where $g[\Lambda]$ is the ``Lagrange dual functional''.

In the cases considered in this work, the primal and dual problems have equal optima: $d^* = p^*$. In the general case this need not be true (for the optimization problems in this Letter it follows from Slater's condition~\cite{RePEc:cwl:cwldpp:80}), and all that can be guaranteed without additional assumptions is that $d^* \le p^*$. The stronger result of equality is known as \emph{strong duality}, and the more general inequality is termed \emph{weak duality}.

The dual problem is infinite-dimensional just as the primal problem was. However, weak duality has a useful consequence: for any dual-feasible point, meaning any $\Lambda(t) \succeq 0$, $g[\Lambda]$ is a lower bound on the primal optimum $p^*$. Therefore, we may consider any convenient finite-dimensional space of functions $\Lambda(t)$, and any dual-feasible point in this space will still yield a lower bound on $p^*$. These bounds may be made tighter by lifting the discretization (considering higher-dimensional subspaces), but are rigorous bounds for any fixed discretization.

It remains to compute the Lagrange dual functional $g[\Lambda]$, and to specify a particular discretization.

First note that the Lagrange functional is linear in $M$. Therefore, fixing $\Lambda$, if $L[M,\Lambda]$ has any dependence on $M$, then the minimization over $M$ will be unbounded below. The subsequent maximization over $\Lambda$ requires that we choose $\Lambda$ so that $L$ is $M$-independent. This is where our constraints come in: since various combinations of the matrices $A^\bullet$, $B^\bullet$, $C^\bullet$, and $D^\bullet$ have known traces with $M$, we can form $\Lambda$ out of these matrices and the resulting Lagrangian can still be $M$-independent. We define
\begin{equation}\label{eq:Lambda-form}
	\Lambda(t) =  \lambda_a^{(i)}(t) A^{(i)} + \left(  D^{(k)} + C^{(k)} \frac{d}{dt} \right) \lambda_d^{(k)}(t)
	\text,
\end{equation}
where summation on repeated indices ($i,k$) is implied. This particular form of $\Lambda(t)$ implicitly imposes many constraints on that matrix-valued function; we shall impose one more on $\lambda(t)$, namely that
\begin{equation}\label{eq:lambda-boundary}
	\lambda^{(k)}_d(T) C^{(k)} = O \text.
\end{equation}
Substituting~\eqref{eq:Lambda-form} into the Lagrange functional~\eqref{eq:lagrange-functional}, we evaluate
\begin{equation}
	L = \lambda^{(k)}_d(0) \Tr C^{(k)} M_0 \text,
\end{equation}
which is independent of $M$ as required. Constraints in the primal problem have become degrees of freedom in the dual problem. Both strengthen the bounds---in the dual case, this is because having more degrees of freedom allows us to search over a larger set of functions $\Lambda$.

With this computation complete, we can now write the Lagrange dual problem in the following form:
\begin{equation}\label{eq:dual-full}
	\begin{split}
		\textbf{maximize }&\lambda^{(k)}_d(0) \Tr C^{(k)} M_0\\
		\textbf{subject to }&
		\lambda^{(k)}_d(T) C^{(k)} = O
\\
		&\lambda_a^{(i)}(t) A^{(i)} + \big(  D^{(k)} + C^{(k)} \frac{d}{d t} \big) \lambda_d^{(k)}\succeq 0\text.
	\end{split}
\end{equation}
To obtain a finite-dimensional space of functions $\lambda$ over which to optimize, we take each $\lambda_a^{(i)}$ and $\lambda_d^{(k)}$ to be defined by a quadratic spline with $K$ knots, evenly spaced at times $t_\kappa = \frac{\kappa}{K+1}T$ for integers $1 \le \kappa \le K$. The splines defining $\lambda_a^\bullet$ thus have $K+3$ parameters: three for the initial condition and one at each knot. The splines defining $\lambda^\bullet_d$ are fixed at $T$ by Eq.~(\ref{eq:lambda-boundary}), so there are only $K+2$ parameters for each. Considering $\lambda$ to be a function $\lambda(t,y)$ of these spline parameters $y$, the dual problem above may be written, on this restricted space, in a form finally amenable to numerical optimization:
 \begin{equation}\label{eq:dualop}
	 \begin{split}
		 \textbf{maximize }& \lambda^{(k)}_d(0) \Tr C^{(k)} M_0 \\
	 \textbf{subject to }& \Lambda(t, y) \succeq 0 \text{ (for all $t$)}\text.
	 \end{split}
\end{equation}
For any fixed number of knots, the bounds obtained from solving this problem are looser than those that would in principle be obtained from solving the full dual problem above. As the number of knots is increased, the obtained bounds converge to those that would be obtained from the full (truncated) problem. (See Appendix~\ref{app:discretization} for a proof.)

While the convex program~\eqref{eq:dualop} has a finite number of degrees of freedom, it possesses an infinite number of constraints. This infinite collection of constraints can be imposed in finite time by using an integral over $t$ to construct the barrier function in the interior-point method. This method was used in a similar context in~\cite{Lawrence:2024hjm}, and is described in detail in Appendix~\ref{app:optimization}.

\paragraph{Demonstration} To begin we demonstrate this method to bound $\langle \hat x(T)\rangle$ in the anharmonic oscillator governed by the Hamiltonian~\eqref{eq:AHOham}, taking a coupling $\lambda = 1$. The initial state is defined in terms of eigenstates $|0\rangle,|1\rangle,\ldots$ of the harmonic oscillator (the same Hamiltonian but with $\lambda = 0$):
\begin{equation}
	|\psi_0\rangle = \frac{6}{7}\left[|0\rangle + \frac 1 2 |1\rangle + \frac 1 4 |2\rangle \right]
	\text.
\end{equation}
We use two bases of operators: $\{I, \hat x, \hat p, \hat x^2\}$ (as discussed above) and a larger basis extended by $\{\hat x^3, \hat x^4, \hat p^2, \hat x \hat p, \hat x^2 \hat p\}$. The convex programs thus obtained (see Appendix~\ref{app:construction} for details of the numerical construction) are solved with a standard interior-point method; see~\cite{boyd2004convex} for an introduction and Appendix~\ref{app:optimization} for the details of our implementation.

From these two bases of operators, and varying the number of knots, a sequence of bounds on real-time evolution are obtained. The parameters defining the convex optimization problems behind the bounds are summarized in Table~\ref{tab:one}. Figure~\ref{fig:one} shows the bounds obtained in this fashion, plotted against an exact result obtained by direct diagonalization of the Hamiltonian. Note that two convex programs (a lower bound and an upper bound) must be solved independently for each time $T$ which is plotted.
\begin{table}
	\begin{tabular}{ccccc}
		$N$ & $K$ & Algebraic & Derivatives & Parameters\\\hline
		4 & 0 & 7 & 7 & 35\\
		4 & 3 & 7 & 7 & 77\\
		9 & 3 & 56 & 21 & 441\\
		9 & 6 & 56 & 21 & 672\\
	\end{tabular}
	\caption{\label{tab:one}Parameters characterizing the convex optimization problems used to obtain Figure~\ref{fig:one}. The size of the basis is $N$, and $K$ gives the number of knots. The next two columns specify the number of algebraic constraints (matrices $A$) and derivative constraints (matrices $D$). The final column gives the dimension of the convex optimization problem being solved.}
\end{table}

\begin{figure}
	\centering\includegraphics[width=0.95\linewidth]{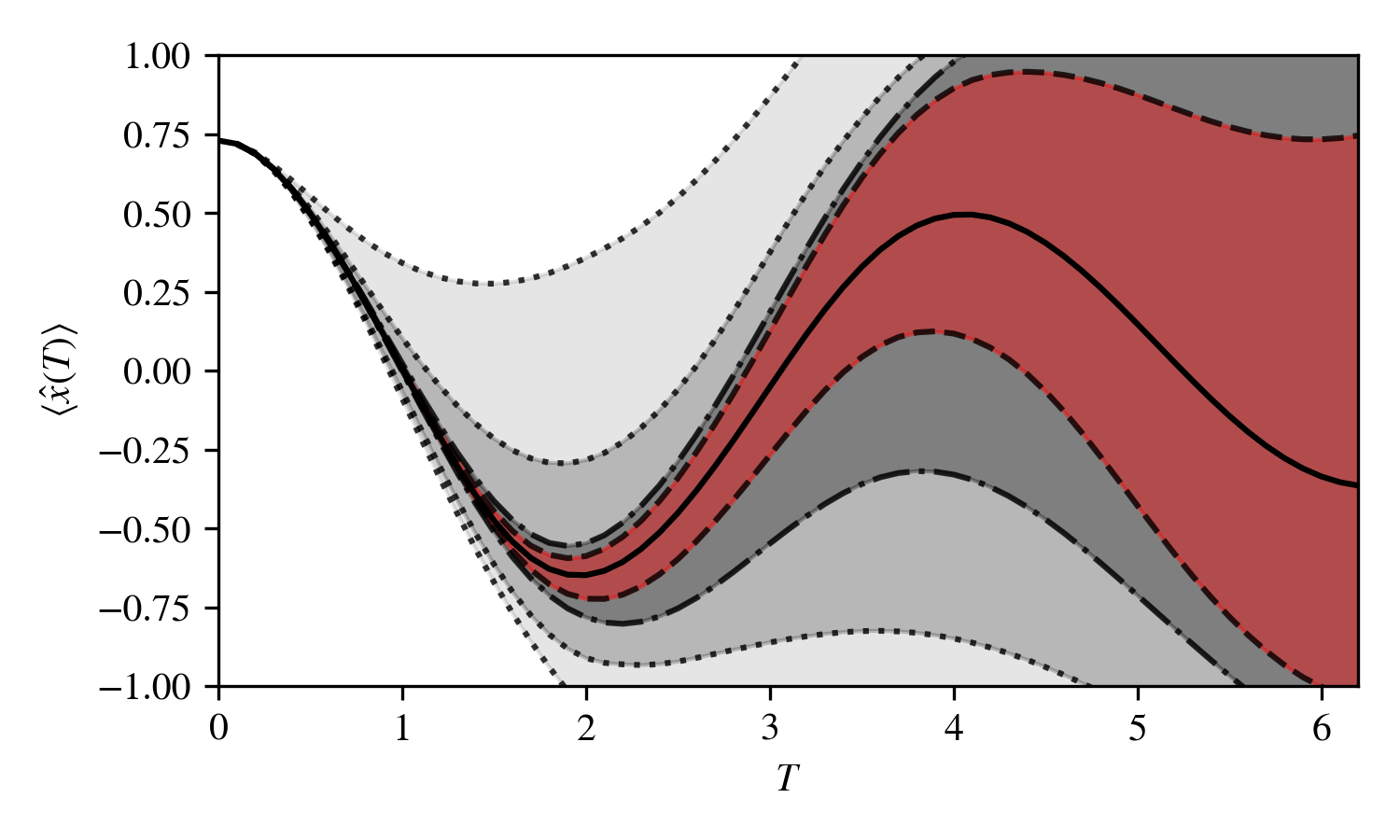}
	\caption{\label{fig:one}Bounds on the time-evolution of the anharmonic oscillator. The solid line shows an exact result obtained by direct diagonalization; the sequentially tighter bounds are detailed in the text and in Table~\ref{tab:one}.}
\end{figure}

The virtue of our approach is that the scaling to larger systems is polynomial without making any approximations, whereas with direct diagonalization (and related methods) the computational cost is exponential in the number of degrees of freedom. Figure~\ref{fig:two} and Table~\ref{tab:two} extend the method to a system of two coupled anharmonic oscillators. The Hamiltonian of this system is
\begin{equation}
	\hat H_2 = \frac{\hat p^2 + \hat q ^2}{2} + \frac {\omega^2} 2 \left(\hat x^2 + \hat y^2\right) + \frac {\left(\hat x - \hat y\right)^2}{2} + \frac{\lambda}{4} \left(\hat x^4 + \hat y^4\right)
	\text,
\end{equation}
where $\hat x,\hat y$ are the position operators of the oscillators, and $\hat p,\hat q$ their conjugate momenta. We adopt $\omega = 1$ and $\lambda = \frac 1 2$. This time, we take as the initial state the tensor product of the ground states of the two uncoupled, unperturbed oscillators; that is, the ground state of the Hamiltonian
\begin{equation}
	\hat H_{2,\text{initial}} = \frac 1 2 \left(\hat p^2 + \hat q^2\right) + \frac{\omega^2}{2} \left(\hat x^2 + \hat y ^2\right)
\end{equation}
Because this system is parity-even, and so $\langle \hat x \rangle = 0$ at all times, we bound the time-dependence of $\langle \hat x^2\rangle$. To define two operator bases, we define a power-counting in which $\hat x,\hat y$ are ``degree 1'' and the conjugate momenta $\hat p,\hat q$ are ``degree 2''. Then the first basis consists of the $8$ monomials which are of degree at most 2, and the second consists of the $26$ monomials of degree at most 4.

\begin{table}
	\begin{tabular}{ccccc}
		$N$ & $K$ & Algebraic & Derivatives & Parameters\\\hline
		8 & 0 & 34 & 18 & 138\\
		8 & 3 & 34 & 18 & 294\\
		26 & 0 & 521 & 103 & 1769
	\end{tabular}
	\caption{\label{tab:two}Parameters characterizing the convex optimization problems used to obtain Figure~\ref{fig:two}. The size of the basis is $N$, and $K$ gives the number of knots. The remaining columns specify the number of algebraic constraints, derivative constraints, and the dimension of the convex program.}
\end{table}

\begin{figure}
	\centering\includegraphics[width=0.95\linewidth]{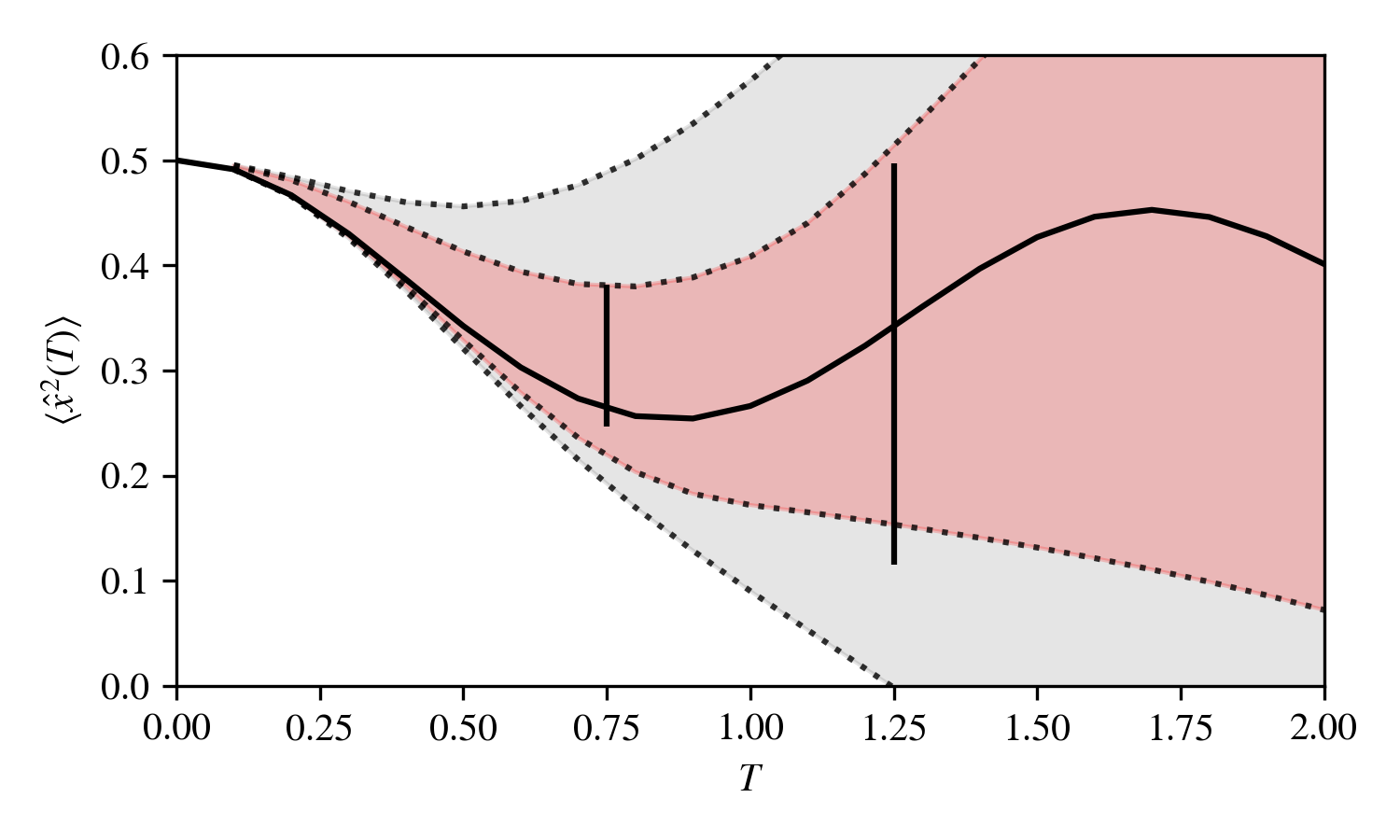}
	\caption{\label{fig:two}Bounds on the time-evolution of two coupled anharmonic oscillators. The solid line shows an exact result obtained by direct diagonalization. The bounds obtained are detailed in Table~\ref{tab:two}; the black bars indicate bounds obtained at $(N,K) = (26,0)$.}
\end{figure}

\paragraph{Discussion}
We have shown how rigorous bounds on the time-evolution of a quantum-mechanical system can be obtained via Lagrange duality and convex optimization. At least at short times, Figures~\ref{fig:one} and~\ref{fig:two} provide strong evidence that the bounds converge to a point as the truncation and discretization are lifted. 

We do not expect that this approach is able to simulate the time-evolution of arbitrary systems in polynomial time (in the number of degrees of freedom). As a result, and because the algorithm is polynomial time in both the number of knots and the size of the basis used, we expect that for a fixed amount of desired time evolution, the size of the basis required to obtain a non-trivial bound must grow exponentially with the number of degrees of freedom being simulated. Demonstrating this scaling in practice is left for future work.

\begin{acknowledgments}
	S.L.~is supported by a Richard P.~Feynman fellowship from the LANL LDRD program. Los Alamos National Laboratory is operated by Triad National Security, LLC, for the National Nuclear Security Administration of U.S. Department of Energy (Contract No. 89233218CNA000001).  B.M.~is supported by the Gloria and Joshua Goldberg Fellowship at Syracuse University and NSERC (Canada), with partial funding from the Mathematical Physics Laboratory of the CRM. D.N.~is supported by the Quantum Science Center (QSC), a National Quantum Information Science Research Center of the U.S. Department of Energy (DOE) and by the U.S. Department of Energy, Office of Science, Office of Nuclear Physics (NP) contract DE-AC52-06NA25396.  
\end{acknowledgments}

\appendix

\section{Optimization}\label{app:optimization}
The core of the method described in this paper is the interior-point algorithm for solving convex programs of the form~\eqref{eq:dualop}. First we will describe the interior-point algorithm in the general case, and then explain how it is specialized to the convex programs solved in this work.

A general convex program with one constraint can be written in the form:
\begin{equation}\label{eq:convex-general}
	\begin{split}
		\textbf{minimize }&f(x)\\
		\textbf{subject to }&h(x) \ge 0\text.
	\end{split}
\end{equation}
Equivalently, we might absorb the constraint into the objective function, defining a new objective
\begin{equation}
	F_0(x) = \begin{cases}f(x) & h(x) \ge 0\\\infty & h(x) < 0\end{cases}
	\text.
\end{equation}
Now an equivalent optimization problem to~\eqref{eq:convex-general} is to perform unconstrained minimization of $F_0(x)$.

To accomplish this we define a sequence of \emph{barrier functions} approximating the constraint, typically by
\begin{equation}
	\phi_s(x) = - s \log h(x)\text.
\end{equation}
These functions have the property that for any $s \ge 0$, $\phi_s(x)$ is finite for feasible points and infinite for infeasible points. We may now define a sequence of objective functions as well:
\begin{equation}
	F_s(x) = f(x) + \phi_s(x)\text.
\end{equation}
Note that this definition is consistent with that of $F_0(x)$ above in the limit $s \rightarrow 0$.

With this family of objective functions defined, the interior-point method is simple: $F_s(x)$ is minimized for successively smaller values of $s$. In practice it is cheap to find a minimum for a large value of $s$ ($s = 10^4$ is used as a starting point in this work). The value of $s$ is then multiplied by some $\mu < 1$ ($\mu = \frac 1 2$ works well in practice and is used here), and the minimization is repeated, using the previous minimum as a starting point.

To perform the intermediate minimizations, we use Newton's method with a Hessian matrix computed exactly (up to floating-point precision). We have found that this outperforms BFGS, which in turn far outperforms simple gradient descent with backtracking. Newton's method is modified to add backtracking (without which it can leave the feasible region).

As noted in the text, the convex program~\eqref{eq:dualop} has an infinite number of constraints. Following~\cite{Lawrence:2024hjm}, we define as our barrier function
\begin{equation}
	\phi_s(y) = -s \int_0^T \log \Lambda(t,y) dt\text.
\end{equation}
This function is integrated numerically, along with its first and second derivatives with respect to the spline parameters $y$, to provide the barrier function as well as evaluations of its gradient and Hessian. The solver used, along with the code to construct the convex program, may be found at~\cite{code}.

\section{Problem construction}\label{app:construction}

Numerical solution of the convex program~\ref{eq:dualop} requires that the matrices $A^\bullet$, $C^\bullet$, and $D^\bullet$ have been computed, so that $\Lambda(t,y)$ may be computed according to Eq.~(\ref{eq:Lambda-form}). We begin this appendix with a summary of the procedure used to find these matrices, and conclude with a discussion of circumstances in which this program has no strictly dual-feasible point. Again, the code used to construct the convex program may be found at~\cite{code}.

The computation of the matrices $A^\bullet$, corresponding to algebraic constraints, is relatively straightforward. The space of Hermitian $N \times N$ matrices has $N^2$ dimensions. Writing each matrix element $M_{ij} \equiv \mathcal O_i^\dagger \mathcal O_j$ in normal-ordered form, we identify all unique monomials that appear in $M$. Each such monomial is associated to an $N\times N$ matrix summarizing the dependence of $M$ on that monomial's expectation value. Label the number of such monomials $m$; then we need only find a basis of $N^2 - m$ matrices for the orthogonal subspace to that spanned by these $m$ matrices. All this is automated with the aid of an appropriate computer algebra system.

Computing $C^\bullet$ and $D^\bullet$ is somewhat more involved. We begin, much as before, by identifying a complete basis of Hermitian operators that appear in $M$. We now wish to identify all derivatives of matrix elements of $M$ that are themselves expressible in terms of matrix elements of $M$. Note, however, that it is possible that while neither $\frac d{dt} M_{22}$ nor $\frac d{dt} M_{33}$ (for example) can be written in terms of matrix elements of $M$, their sum (or some other linear combination) can be.

To deal with this, we label a vector of all expectation values that can be extracted from $M$ by $x$. Now write the Heisenberg equation of motion in the form
\begin{equation}
	\frac{d}{dt} x(t) = \mathfrak d x(t) + \tilde{\mathfrak d} y(t)\text.
\end{equation}
Above, the vector $y$ consists of those expectation values which appear in derivatives of $M$, but cannot be extracted from $M$. We may (numerically, with the aid of a computer algebra system as before) find the matrices $\mathfrak d$ and $\tilde{\mathfrak d}$ in any basis. We desire to find covectors $v$ such that the derivative of $v \cdot x$ may be written exclusively in terms of $M$. In other words, we seek vectors $v$ such that $v^T \tilde{\mathfrak d} = 0$. A complete basis of such vectors $v$ is readily obtained by linear algebra.

With this accomplished, we have identified the complete linear subspace of the Heisenberg equations of motion which can be expressed in terms of our truncated matrix $M$. It is then straightforward to select (arbitrarily) a basis for this subspace, and for each element of that basis construct matrices $(C,D)$ defining the relation.

There is no guarantee that the convex program constructed in this fashion has a strictly dual-feasible point, which is required for the numerical method in the previous section to operate. In practice we find that, depending on the choice of the set of operators used to generate the program, there are often no strictly dual-feasible points. This phenomenon is not specific to the somewhat exotic program constructed here, but instead occurs even with very simple semidefinite programs. A minimal example of this phenomenon is the $2\times 2$ semidefinite program:
\begin{equation}
	\begin{split}
		\textbf{minimize } & x\\
		\textbf{subject to } & \left(\begin{matrix}1 & x\\x & \epsilon\end{matrix}\right) \succeq 0\text.
	\end{split}
\end{equation}
For generic $\epsilon > 0$, the feasible space is of dimension $1$ (and parameterized by $x$). For $\epsilon < 0$, there are of course no feasible points at all. In the special case $\epsilon = 0$, the feasible space is of dimension $0$, consisting of just the point $x=0$. In effect, an extra affine constraint has been ``hidden'' inside the inequality.

This phenomenon causes no major difficulty for the theory of convex optimization---the feasible space is certainly still convex, among other things---but is a serious practical problem for interior-point methods which depend on performing numerical operations on strictly feasible points. In the body of this paper we evaded this issue by taking care to always choose a set of generating operators for which this phenomenon did not occur, \emph{i.e.}~for which there existed a strictly dual-feasible point. In the remainder of this appendix we outline a more general approach.

The interior-point method is unable to function because the parameterizing space of potential solutions is of dimension $m$ (isomorphic to $\mathbb R^m$), while the space of feasible points is in fact of dimension $n < m$. Therefore all points lie on the boundary and no point can be strictly feasible. In order to use an interior-point method, we must first identify the affine constraints which are implied by the inequalities in the convex program. After this is done, these affine constraints may be explicitly added to the formulation of the problem; that is, we may construct an explicit parameterization of the $n$-dimensional subspace which contains all feasible points. Once this is done, interior-point methods may be used immediately.

The challenge, therefore, is to identify the implicit affine constraints. Consider a generic semidefinite program of the form
\begin{equation}
	\begin{split}
		\textbf{minimize }&f(\lambda)\\
		\textbf{subject to }&M + \sum_i \lambda_i L_i \succeq 0
		\text.
	\end{split}
\end{equation}
Here the matrices $M,L_\bullet$ are $N \times N$ and Hermitian. The objective function $f(\cdot)$ is irrelevant to the current question. The positivity constraint is profitably rewritten as a set of infinitely many affine inequalities, namely that for any vector $y \in \mathbb C^N$ we have
\begin{equation}
	y^\dagger M y + \sum_i \lambda_i (y^\dagger L_i y) \ge 0\text.
\end{equation}
Each vector $y$ therefore defines an oriented codimension-$1$ hyperplane. An affine equality is implied by this set of inequalities precisely when the same hyperplane appears twice, but with opposite orientation (thus defining a quantity $a + b\cdot\lambda$ which is required to simultaneously be greater-than-or-equal and less-than-or-equal to zero). It follows that a necessary and sufficient condition is that there exist a pair of vectors $y,\tilde y$ such that
\begin{equation}
	y^\dagger M y = -\tilde y^\dagger M \tilde y
	\text{ and }
	y^\dagger L_i y = -\tilde y^\dagger L_i \tilde y
	\text.
\end{equation}
These are quadratic equalities, readily solved. Of course $y=\tilde y=0$ is always a solution; we are looking for non-vanishing solutions $y,\tilde y$. These can be systematically found by defining expanded matrices $\mathfrak M,\mathfrak L_\bullet$ to be block-diagonal of dimension $2N \times 2N$:
\begin{equation}
	\mathfrak M = \left(\begin{matrix}M & 0 \\ 0 & M\end{matrix}\right) \text{ and }
		\mathfrak L_i = \left(\begin{matrix}L_i & 0 \\ 0 & L_i\end{matrix}\right)\text.
\end{equation}
The desired solutions are now defined by the joint null space of $\mathfrak M,\{\mathfrak L_i\}$. Having been identified, these affine equalities can be made explicit in the semidefinite program, allowing interior-point methods to be used.

\section{Discretization}\label{app:discretization}
This appendix is dedicated to establishing that nothing is lost by using quadratic splines in the definition of $\Lambda(t)$. In particular, let $d^*_\infty$ be the optimum of the original dual problem (\ref{eq:dual-full}), and let $d^*_K$ be the optimum of the discretized dual problem (\ref{eq:dualop}) with $K$ knots. We will show that, assuming there are any strictly dual-feasible points, $\lim_{K\rightarrow\infty} d^*_K = d^*_\infty$.

A sketch of the argument is as follows. Any function $\Lambda(t)$ may be pointwise-approximated arbitrarily well by the ansatz (\ref{eq:Lambda-form}) with $\lambda(t)$ defined by a quadratic spline. Such approximations are not necessarily positive semi-definite; however, as long as $\Lambda$ is \emph{strictly} feasible, and the number of knots in the spline is sufficiently high, one can always find a spline $\lambda(t)$ such that the approximation is itself positive semi-definite (and therefore strictly feasible). The optimal function $\Lambda$, of course, lies on the boundary and is therefore not strictly feasible; however this optimal $\Lambda$ is arbitrarily well approximated (as measured by the objective function) by strictly feasible functions, which are in turn arbitrarily well approximated by splines.

The technical workhorse of this argument is the following lemma on the pointwise convergence of quadratic spline approximations.
\begin{lemma}\label{lemma-spline}
	Let $f(t)$ be a real-valued function on the interval $[0,T]$. Furthermore assume that $f$ and $f'$ are each Lipschitz with constant $C$: $|f(t_1) - f(t_2)| \le C |t_1-t_2|$ and $|f'(t_1) - f'(t_2)| \le C |t_1 - t_2|$. Then for any $\epsilon > 0$ there exists a quadratic spline $\tilde f(t)$, with a finite number of knots $K_\epsilon$, such that at all $t \in [0,T]$, $|f(t) - \tilde f(t)| \le \epsilon$ and $|f'(t) - \tilde f'(t)| \le \epsilon$.
\end{lemma}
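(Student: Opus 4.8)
The plan is to exploit the defining structure of a quadratic spline: it is $C^1$ and piecewise quadratic, so its derivative is a \emph{continuous piecewise-linear function}, i.e.\ a linear spline with breakpoints at the same knots. This observation reduces the two-sided estimate to a single approximation problem for $f'$. One should resist the temptation to interpolate $f$ directly with a quadratic spline at the nodes, since a quadratic piece cannot independently match both the value and the slope at both of its endpoints, and the resulting $C^1$-matching conditions control $\tilde f$ but give no clean handle on $\tilde f'$. Instead, I would first approximate $f'$ by a linear spline $g$ and then \emph{define} the candidate by $\tilde f(t) = f(0) + \int_0^t g(s)\,ds$. By construction $\tilde f$ is piecewise quadratic and $C^1$, hence a genuine quadratic spline with the prescribed evenly-spaced knots $t_\kappa = \tfrac{\kappa}{K+1}T$, and $\tilde f' = g$.

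For the derivative estimate, take $g$ to be the piecewise-linear interpolant of $f'$ at the nodes $t_0 = 0 < t_1 < \cdots < t_K < t_{K+1} = T$ with spacing $h = T/(K+1)$. On a subinterval, writing $t = t_\kappa + \theta h$ with $\theta \in [0,1]$, using that $g$ agrees with $f'$ at the endpoints together with the Lipschitz bound $|f'(t_1) - f'(t_2)| \le C|t_1 - t_2|$, one finds $|f'(t) - g(t)| \le 2\theta(1-\theta)\,C h \le \tfrac{1}{2} C h$. Hence $\sup_t |f'(t) - \tilde f'(t)| \le \tfrac{1}{2} C h$, which is at most $\epsilon$ once $h$ is small enough.

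For the function estimate, since $f$ is Lipschitz it is absolutely continuous, so $f(t) = f(0) + \int_0^t f'(s)\,ds$. Subtracting the definition of $\tilde f$ and using the endpoint match $\tilde f(0) = f(0)$ gives $|f(t) - \tilde f(t)| \le \int_0^t |f'(s) - g(s)|\,ds \le T \sup_s |f'(s) - g(s)| \le \tfrac{1}{2} C T h$. Both error bounds are proportional to $h = T/(K+1)$ and therefore tend to zero as $K \to \infty$; choosing $K_\epsilon$ large enough that $\tfrac{1}{2} C h \le \epsilon$ and $\tfrac{1}{2} C T h \le \epsilon$ hold simultaneously completes the argument.

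I do not expect a genuine analytic obstacle here — the estimates are routine — so the main ``hard part'' is conceptual: recognizing that the correct move is to approximate $f'$ and integrate, and verifying that this integration reproduces exactly the admissible class used in the body (all $C^1$ piecewise quadratics with the given knots, parameterized by the constant $f(0)$ plus the linear spline $g$, matching the $K+3$-parameter count). It is worth noting that the full strength of the hypotheses is convenient but not essential: the Lipschitz property of $f'$ supplies the clean linear-in-$h$ rate, whereas mere uniform continuity would already force the interpolation error to zero, and the Lipschitz property of $f$ is used only to invoke the fundamental theorem of calculus, being in any case implied by the boundedness of $f'$ on the compact interval $[0,T]$.
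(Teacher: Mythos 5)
Your proposal is correct and follows essentially the same route as the paper's proof: approximate $f'$ by its piecewise-linear interpolant at evenly spaced nodes, define $\tilde f(t) = f(0) + \int_0^t \tilde f'(s)\,ds$, bound the derivative error via the Lipschitz condition, and integrate that bound to control $|f - \tilde f|$. Your interpolation estimate $2\theta(1-\theta)Ch \le \tfrac12 Ch$ is in fact slightly sharper than the paper's bound of $Ch$, but this is only a constant-factor refinement of the identical argument.
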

\begin{proof}Define $g(t) = \frac{d}{dt} f(t)$. Approximate $g$ by a linear spline with $K$ knots; that is, a function $\tilde g(t)$, linear on intervals $[\frac{T k}{K+1},\frac{T(k+1)}{K+1}]$ and obeying $g(Tk / (K+1)) = \tilde g(Tk/(K+1))$ for integers $k$. From this, define the quadratic spline $\tilde f(t) = f(0) + \int_0^t ds \, \tilde g(s)$. Lipschitz continuity of $g$ implies
	\begin{equation}
		|g(t) - \tilde g(t)| \le \frac{CT}{K+1}\text.
	\end{equation}
	This establishes the convergence of the first derivative. Integrating, we find that
	\begin{equation}
		|\tilde f(t) - f(t)| \le \int_0^t |g(s) - \tilde g(s)| ds \le t \frac{CT}{K+1}\text,
	\end{equation}
	which establishes the pointwise convergence of $\tilde f$ as well. We complete the proof by noting that $K_\epsilon = \epsilon^{-1} C T (1 + T)$ is sufficient.
\end{proof}

We will also use the fact that small perturbations to a matrix---as measured by the Frobenius norm---do not much change the spectrum. This is captured by the following Lemma.
\begin{lemma}\label{frobeigen}
	Let $M$ be a positive definite Hermitian matrix with minimum eigenvalue $\sigma_{\mathrm{min}}$, and let $\Delta$ be a Hermitian matrix of bounded norm $||\Delta||_F \le \sigma_{\mathrm{min}}$. Then the sum is positive semi-definite: $M + \Delta \succeq 0$.
\end{lemma}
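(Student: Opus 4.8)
The plan is to control the smallest eigenvalue of $M+\Delta$ through the variational characterization of Hermitian eigenvalues, reducing the whole statement to a uniform bound on the quadratic form $v^\dagger \Delta v$. I would begin from the Rayleigh-quotient identity $\lambda_{\min}(A) = \min_{||v||=1} v^\dagger A v$, valid for any Hermitian $A$, and apply it to $A = M + \Delta$. By linearity of the quadratic form, for every unit vector $v$ one has $v^\dagger(M+\Delta)v = v^\dagger M v + v^\dagger \Delta v \ge \sigma_{\mathrm{min}} + v^\dagger \Delta v$, where the inequality uses $v^\dagger M v \ge \sigma_{\mathrm{min}}$ for all unit $v$ (the same variational principle applied to $M$). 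Establishing $M + \Delta \succeq 0$ then amounts to showing this lower bound is nonnegative.

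The next step is to bound the perturbing term uniformly over unit vectors. Since $\Delta$ is Hermitian, its spectral decomposition $\Delta = \sum_i \mu_i u_i u_i^\dagger$ gives $v^\dagger \Delta v = \sum_i \mu_i |u_i^\dagger v|^2$, a convex combination of the real eigenvalues $\mu_i$ (because $\sum_i |u_i^\dagger v|^2 = ||v||^2 = 1$). Hence $|v^\dagger \Delta v| \le \max_i |\mu_i| = ||\Delta||_{\mathrm{op}}$, the operator norm. Combining with the previous paragraph yields $v^\dagger(M+\Delta)v \ge \sigma_{\mathrm{min}} - ||\Delta||_{\mathrm{op}}$ uniformly in $v$.

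Finally, I would connect the operator norm to the Frobenius norm that appears in the hypothesis. Writing the Frobenius norm in the eigenbasis of $\Delta$, $||\Delta||_F^2 = \sum_i \mu_i^2 \ge \max_i \mu_i^2 = ||\Delta||_{\mathrm{op}}^2$, so $||\Delta||_{\mathrm{op}} \le ||\Delta||_F \le \sigma_{\mathrm{min}}$ by assumption. Substituting gives $v^\dagger(M+\Delta)v \ge \sigma_{\mathrm{min}} - \sigma_{\mathrm{min}} = 0$ for every unit $v$, which is exactly $M + \Delta \succeq 0$.

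As for the main obstacle: there is essentially no deep difficulty, since this is a standard eigenvalue-perturbation estimate (a weak form of Weyl's inequality). The one point requiring care is the direction of the norm comparison, namely that the Frobenius norm \emph{dominates} the operator norm for Hermitian matrices; this is what makes the hypothesis $||\Delta||_F \le \sigma_{\mathrm{min}}$ strong enough, and it follows immediately from the sum-of-squares representation above. An equivalent route would invoke Weyl's inequality directly as $\lambda_{\min}(M+\Delta) \ge \lambda_{\min}(M) + \lambda_{\min}(\Delta) \ge \sigma_{\mathrm{min}} - ||\Delta||_{\mathrm{op}}$ and then apply the same norm comparison; I would prefer the self-contained Rayleigh-quotient argument to avoid citing external machinery.
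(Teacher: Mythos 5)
Your proof is correct and follows essentially the same route as the paper's: bound $v^\dagger M v$ below by $\sigma_{\mathrm{min}}$, bound $|v^\dagger \Delta v|$ above by the spectral norm, and use $\norm{\Delta}_2 \le \norm{\Delta}_F$ to close the argument. You merely supply the spectral-decomposition details for the norm comparison that the paper states without proof.
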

\begin{proof}
	We will show that for all (complex) vectors $v$ such that $|v| = 1$, $v^\dagger (M + \Delta) v \ge 0$; this is equivalent to that matrix being positive semi-definite. To this end we provide a lower bound on the first term and an upper bound on the second. Due to the minimum eigenvalue of $M$ we have $v^\dagger M v \ge \sigma_{\mathrm{min}}$. Meanwhile, denoting by $||\Delta||_2$ the spectral norm of $\Delta$, we have $||\Delta||_F \ge ||\Delta||_2 \ge 0$. Therefore $|v^\dagger \Delta v| \le ||\Delta||_F \le \sigma_{\mathrm{min}}$. The desired inequality follows.
\end{proof}

Now we return to considering the problems (\ref{eq:dual-full}) and (\ref{eq:dualop}). First we show that positive-definite spline approximations to strictly feasible points exist, given a sufficient number of knots. Throughout, $||\cdot||_F$ denotes the Frobenius norm.
\begin{lemma}Let $\Lambda(t) \succ 0$ be a strictly feasible point for (\ref{eq:dual-full}). Then for any $\delta > 0$, there exist quadratic splines $(\tilde\lambda_a, \tilde\lambda_d)$ such that $\tilde\Lambda(t)$, defined according to Eq.~(\ref{eq:Lambda-form}), is also strictly feasible, and $||\Lambda(t) - \tilde\Lambda(t)||_F \le \delta$.
\end{lemma}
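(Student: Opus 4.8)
The plan is to build $\tilde\Lambda$ by replacing each defining function $\lambda_a^{(i)}$ and $\lambda_d^{(k)}$ with a quadratic-spline approximant supplied by Lemma~\ref{lemma-spline}, and then to argue that the induced perturbation of $\Lambda$ is small enough in Frobenius norm for strict positivity to survive by Lemma~\ref{frobeigen}. Since $\tilde\Lambda$ is obtained by substituting splines into Eq.~(\ref{eq:Lambda-form}), it is automatically of the admissible form, so the only things needing verification are the Frobenius bound together with the preservation of positivity, and the boundary condition~(\ref{eq:lambda-boundary}).

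First I would arrange the regularity needed to invoke Lemma~\ref{lemma-spline}: I assume the functions $\lambda_a^{(i)},\lambda_d^{(k)}$ defining $\Lambda$ are Lipschitz with Lipschitz derivatives, which holds in particular when they are $C^2$ on $[0,T]$. A general strictly feasible point can be reduced to this case by a preliminary smooth approximation respecting~(\ref{eq:lambda-boundary}), since strict feasibility is an open condition and a sufficiently close smooth approximant remains strictly feasible. Next I would exploit compactness: because $\Lambda(t)\succ0$ for every $t$ and the minimum eigenvalue depends continuously on $t$ over the compact interval $[0,T]$, there is a uniform lower bound $\sigma_* := \min_t \sigma_{\mathrm{min}}(\Lambda(t)) > 0$.

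I would then apply Lemma~\ref{lemma-spline} to each defining function, obtaining splines with $|\lambda_a^{(i)}-\tilde\lambda_a^{(i)}|\le\epsilon$, $|\lambda_d^{(k)}-\tilde\lambda_d^{(k)}|\le\epsilon$, and crucially $|\dot\lambda_d^{(k)}-\dot{\tilde\lambda}_d^{(k)}|\le\epsilon$ uniformly in $t$, the last because Eq.~(\ref{eq:Lambda-form}) depends on $\dot\lambda_d$ through the $C^{(k)}$ term. Subtracting Eq.~(\ref{eq:Lambda-form}) for $\Lambda$ and $\tilde\Lambda$ and applying the triangle inequality for the Frobenius norm gives
\begin{equation}
	||\Lambda(t)-\tilde\Lambda(t)||_F \le \epsilon R, \qquad R := \sum_i ||A^{(i)}||_F + \sum_k \left(||D^{(k)}||_F + ||C^{(k)}||_F\right),
\end{equation}
with $R$ a fixed finite constant. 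Choosing $\epsilon < \min(\delta,\sigma_*)/R$ makes the difference both at most $\delta$ and strictly less than $\sigma_*$; the latter lets me apply Lemma~\ref{frobeigen} pointwise in $t$ (in its strict form, the perturbation being strictly smaller than the minimum eigenvalue) to conclude $\tilde\Lambda(t)\succ0$ for every $t$, i.e.\ strict feasibility.

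The step I expect to be the main obstacle is the boundary condition~(\ref{eq:lambda-boundary}): the construction in Lemma~\ref{lemma-spline} pins the approximant only at $t=0$, so a priori $\tilde\lambda_d^{(k)}(T)\ne\lambda_d^{(k)}(T)$ and the constraint $\tilde\lambda_d^{(k)}(T)C^{(k)}=O$ may fail. I would repair this by forcing the spline to interpolate the endpoint value at $T$ as well: in the notation of Lemma~\ref{lemma-spline}, $\tilde f(T)=f(0)+\int_0^T\tilde g$ differs from $f(T)=f(0)+\int_0^T g$ only by the quadrature error $\int_0^T(g-\tilde g)$, so adding to $\dot{\tilde\lambda}_d$ the constant $c=\frac{1}{T}\int_0^T(g-\tilde g)\,ds$—equivalently adding the linear function $ct$ to $\tilde\lambda_d$, which is still a quadratic spline with no new knots—enforces $\tilde\lambda_d^{(k)}(T)=\lambda_d^{(k)}(T)$ while perturbing value and derivative only by $O(\epsilon)$. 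Because the original $\lambda_d$ satisfies~(\ref{eq:lambda-boundary}), exact interpolation at $T$ then yields $\tilde\lambda_d^{(k)}(T)C^{(k)}=\lambda_d^{(k)}(T)C^{(k)}=O$, and the bounds above are unchanged up to the harmless replacement $\epsilon\to O(\epsilon)$.
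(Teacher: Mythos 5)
Your proof follows the same skeleton as the paper's: approximate the generating functions $(\lambda_a,\lambda_d)$ by quadratic splines via Lemma~\ref{lemma-spline}, propagate the uniform error through Eq.~(\ref{eq:Lambda-form}) by the triangle inequality to get a Frobenius bound linear in $\epsilon$, and invoke Lemma~\ref{frobeigen} pointwise in $t$ to preserve strict positivity; your constant $R$ plays the role of the paper's $3X$. Where you go beyond the paper is worth noting. First, the paper's proof never addresses the equality constraint $\lambda_d^{(k)}(T)C^{(k)}=O$ of problem~(\ref{eq:dual-full}), even though strict feasibility of $\tilde\Lambda$ requires it and the construction of Lemma~\ref{lemma-spline} pins the approximant only at $t=0$; your repair---adding the linear function $ct$ with $c=\frac1T\int_0^T(g-\tilde g)\,ds$, which stays within the quadratic-spline class, enforces exact interpolation at $T$, and perturbs value and derivative only by $O(\epsilon)$---genuinely patches this gap, and enforcing $\tilde\lambda_d^{(k)}(T)=\lambda_d^{(k)}(T)$ componentwise is (more than) sufficient for the constraint on the linear combination. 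Second, you make explicit two points the paper leaves implicit: the compactness argument giving a uniform eigenvalue floor $\sigma_*=\min_t\sigma_{\mathrm{min}}(\Lambda(t))>0$ (the paper absorbs this into its ``without loss of generality'' choice of small $\delta$), and the Lipschitz regularity of $\lambda_a,\lambda_d$ needed to invoke Lemma~\ref{lemma-spline} at all, which you handle by a preliminary smoothing step (sketched rather than proved, but plausible since strict feasibility is open). In short: same route, but your version is more careful and closes a real omission in the paper's argument.
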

\begin{proof}By Lemma~\ref{frobeigen}, we may assume without loss of generality that $\delta$ is sufficiently small that for any Hermitian matrix $\Delta$ with $||\Delta||_F < \delta$, $\Lambda(t) - \Delta \succ 0$ at all times $t$. 
	As $\Lambda(t)$ is feasible, there exist $(\lambda_a,\lambda_d)$ which generate $\Lambda(t)$ according to Eq.~(\ref{eq:Lambda-form}). By Lemma~\ref{lemma-spline}, those functions $\lambda_a,\lambda_d$ can be approximated by splines such that
	\begin{equation}
		\begin{split}
			|\tilde \lambda_\bullet(t) - \lambda_\bullet(t)| &\le \epsilon\text{, and}\\
			|\tilde \lambda_\bullet'(t) - \lambda_\bullet'(t)| &\le \epsilon\text.
		\end{split}
	\end{equation}
	Label by $X$ the norm of the maximum matrix element of all matrices $A$,$C$,$D$. Then on inspection of Eq.~(\ref{eq:Lambda-form}), the above implies
	\begin{equation}
		||\Lambda(t) - \tilde\Lambda(t)||_F \le 3 X \epsilon
	\end{equation}
	We complete the proof by taking $\epsilon = \frac 1 3 X^{-1} \delta$.
\end{proof}

Finally, note that, since the space of feasible functions $\Lambda(t)$ is convex and (by assumption) has an interior, there exist strictly feasible points $\Lambda(t)$ that are arbitrarily close to the optimal $\Lambda^*(t)$ (which is a point on the boundary, and therefore not itself strictly feasible). By the above Lemma these strictly feasible $\Lambda(t)$ are in turn arbitrarily well approximated by spline approximants $\tilde \Lambda(t)$.

Critically, in this appendix we have held the truncation of operators fixed (corresponding to the dimension $N$ of the matrices from which the convex program is built). We are not currently able to prove that, in the limit as this truncation is lifted, the upper and lower bounds converge to the same point. This remains an important topic for future work.

\begin{widetext}
	\section{An example}
\label{app:examples}

It is useful to work out some of the details for the anharmonic oscillator with $\lambda = 0$, making it the \emph{harmonic oscillator}. Taking also $\omega = 1$, this is described by $\hat{H} = \hat{p}^2 / 2 + \hat{x}^2 / 2$. We will restrict to the $3 \times 3$ matrix generated by the set $\{ 1, \hat{x},  \hat{p} \}$. The matrix $M$ becomes
\begin{align}
	\begin{pmatrix} 1 & \langle \hat{x} \rangle & \langle \hat{p} \rangle \\
				\langle \hat{x} \rangle & \langle \hat{x}^2 \rangle & \langle \hat{x}\hat{p} \rangle \\
				\langle \hat{p} \rangle & \langle \hat{x}\hat{p} \rangle - i & \langle \hat{p}^2 \rangle
	\end{pmatrix}
\end{align}
From the form of $M$, we can see that the algebraic constraints will be 
\begin{equation}
	A^{(1)} = 	\begin{pmatrix} 0 & i & 0 \\
				 -i & 0 & 0 \\
				  0 & 0 & 0
	\end{pmatrix}  , \quad
	 A^{(2)} = 	\begin{pmatrix} 0 & 0 & i \\
				 0 & 0 & 0 \\
				  -i & 0 & 0
	\end{pmatrix} , \quad
	 A^{(3)} = 	\begin{pmatrix} 1 & 0 & 0 \\
				 0 & 0 & -i \\
				 0& i & 0
	\end{pmatrix}\text.
\end{equation}
There are six derivative constraints, coming from the $C$s 
\begin{equation}
	C = \left\{	\begin{pmatrix} 1 & 0 & 0 \\
				 0 & 0 & 0 \\
				  0 & 0 & 0
	\end{pmatrix} \, ,
	\begin{pmatrix} 0 & 1 & 0 \\
				 1 & 0 & 0 \\
				  0 & 0 & 0
	\end{pmatrix} \, ,
	\begin{pmatrix} 0 & 0 & 1 \\
				 0 & 0 & 0 \\
				 1 & 0 & 0
	\end{pmatrix} \, ,
	\begin{pmatrix} 0 & 0 & 0 \\
				 0 & 1 & 0 \\
				 0 & 0 & 1
	\end{pmatrix} \, ,
	\begin{pmatrix} 0 & 0 & 0 \\
				 0 & 1 & 0 \\
				 0 & 0 & -1
	\end{pmatrix} \, ,
	\begin{pmatrix} 0 & 0 & 0 \\
				 0 & 0 & 1 \\
				 0 & 1 & 0
	\end{pmatrix}\right\} .
\end{equation}
with corresponding $D$s
\begin{equation}
	D = \left\{	\begin{pmatrix} 0 & 0 & 0 \\
				 0 & 0 & 0 \\
				  0 & 0 & 0
	\end{pmatrix} \, ,
	\begin{pmatrix} 0 & 0 & 1 \\
				 0 & 0 & 0 \\
				  1 & 0 & 0
	\end{pmatrix} \, ,
	\begin{pmatrix} 0 & -1 & 0 \\
				 -1 & 0 & 0 \\
				 0 & 0 & 0
	\end{pmatrix} \, ,
	\begin{pmatrix} 0 & 0 & 0 \\
				 0 & 0 & 0 \\
				 0 & 0 & 0
	\end{pmatrix} \, ,
	\begin{pmatrix} 0 & 0 & 0 \\
				 0 & 0 & 2  \\
				 0 & 2  & 0
	\end{pmatrix} \, ,
	\begin{pmatrix} 0 & 0 & 0 \\
				 0 & -2 & 0  \\
				 0 & 0 & 2
	 \end{pmatrix} \,\right\} .
\end{equation}
\end{widetext}
In particular, because the system closes (commutators with $\hat{H}$ do not increase the length of strings of $\hat{x}$s and $\hat{p}$s), we find a derivative constraint for every independent element of $M$.

Now let us derive a simple bound on $\langle \hat{x} \rangle$ due to energy conservation: positivity of the upper left $2\times 2$ minor of $M$ implies that
\begin{align}
	\langle \hat{x}(t) \rangle^2 \leq \langle \hat{x}(t)^2 \rangle \, .
\end{align}
which applies for all times. Since $\hat{p}^2$ is positive, we have $\hat{x}^2 = 2E - \hat{p}^2>  2E$. The result is the bound
\begin{align}
	| \langle \hat{x} (t)  \rangle | \leq \sqrt{2 E} \, .
\end{align}
We would like to prove this bound using the dual formulation, \textit{i.e.}\ to derive a functional that proves this bound. Recall that every $\Lambda(t) \succeq 0$ which leads to a finite Lagrangian implies a bound. It turns out that the $\lambda_a$ functions will not be needed, and a single $\delta$-function will be strong enough to prove this (non-optimal!) bound. Our bound does not use any initial expectation values except the energy, so it will only require $C^{(1)}$ and $C^{(4)}$. The functional becomes
\begin{align}
	\Lambda(t) = \delta(t - T) 
	 \begin{pmatrix} - \lambda^{(1)}_d & 1/2 & 0 \\
	   1/2 &   -  \lambda^{(4)}_d & 0 \\
	   0 & 0 &  - \lambda^{(4)}_d 
	\end{pmatrix}
	\label{eq:HOlambda}
\end{align}
With this $\Lambda$, the Lagrangian this evaluates to 
\begin{align}
	 L =\lambda^{(1)}_d  + 2 E \lambda^{(4)}_d  \, .
\end{align}
Maximizing $L$ such that the matrix appearing in equation~\eqref{eq:HOlambda} is positive yields the lower bound
\begin{align}
 d^* = - \sqrt{2 E} \, .
 \end{align}
Switching the sign on $O$ yields an upper bound, and in the end we find
\begin{align}
 - \sqrt{2 E} \leq \langle \hat{x}(t) \rangle \leq \sqrt{2E}
\end{align}
as expected. 

\bibliography{refs}

\begin{thebibliography}{43}%
\makeatletter
\providecommand \@ifxundefined [1]{%
 \@ifx{#1\undefined}
}%
\providecommand \@ifnum [1]{%
 \ifnum #1\expandafter \@firstoftwo
 \else \expandafter \@secondoftwo
 \fi
}%
\providecommand \@ifx [1]{%
 \ifx #1\expandafter \@firstoftwo
 \else \expandafter \@secondoftwo
 \fi
}%
\providecommand \natexlab [1]{#1}%
\providecommand \enquote  [1]{``#1''}%
\providecommand \bibnamefont  [1]{#1}%
\providecommand \bibfnamefont [1]{#1}%
\providecommand \citenamefont [1]{#1}%
\providecommand \href@noop [0]{\@secondoftwo}%
\providecommand \href [0]{\begingroup \@sanitize@url \@href}%
\providecommand \@href[1]{\@@startlink{#1}\@@href}%
\providecommand \@@href[1]{\endgroup#1\@@endlink}%
\providecommand \@sanitize@url [0]{\catcode `\\12\catcode `\$12\catcode
  `\&12\catcode `\#12\catcode `\^12\catcode `\_12\catcode `\%12\relax}%
\providecommand \@@startlink[1]{}%
\providecommand \@@endlink[0]{}%
\providecommand \url  [0]{\begingroup\@sanitize@url \@url }%
\providecommand \@url [1]{\endgroup\@href {#1}{\urlprefix }}%
\providecommand \urlprefix  [0]{URL }%
\providecommand \Eprint [0]{\href }%
\providecommand \doibase [0]{https://doi.org/}%
\providecommand \selectlanguage [0]{\@gobble}%
\providecommand \bibinfo  [0]{\@secondoftwo}%
\providecommand \bibfield  [0]{\@secondoftwo}%
\providecommand \translation [1]{[#1]}%
\providecommand \BibitemOpen [0]{}%
\providecommand \bibitemStop [0]{}%
\providecommand \bibitemNoStop [0]{.\EOS\space}%
\providecommand \EOS [0]{\spacefactor3000\relax}%
\providecommand \BibitemShut  [1]{\csname bibitem#1\endcsname}%
\let\auto@bib@innerbib\@empty
\bibitem [{\citenamefont {Mandelstam}\ and\ \citenamefont
  {Tamm}(1991)}]{Mandelstam1991}%
  \BibitemOpen
  \bibfield  {author} {\bibinfo {author} {\bibfnamefont {L.}~\bibnamefont
  {Mandelstam}}\ and\ \bibinfo {author} {\bibfnamefont {I.}~\bibnamefont
  {Tamm}},\ }\bibinfo {title} {The uncertainty relation between energy and time
  in non-relativistic quantum mechanics},\ in\ \href
  {https://doi.org/10.1007/978-3-642-74626-0_8} {\emph {\bibinfo {booktitle}
  {Selected Papers}}},\ \bibinfo {editor} {edited by\ \bibinfo {editor}
  {\bibfnamefont {B.~M.}\ \bibnamefont {Bolotovskii}}, \bibinfo {editor}
  {\bibfnamefont {V.~Y.}\ \bibnamefont {Frenkel}},\ and\ \bibinfo {editor}
  {\bibfnamefont {R.}~\bibnamefont {Peierls}}}\ (\bibinfo  {publisher}
  {Springer Berlin Heidelberg},\ \bibinfo {address} {Berlin, Heidelberg},\
  \bibinfo {year} {1991})\ pp.\ \bibinfo {pages} {115--123}\BibitemShut
  {NoStop}%
\bibitem [{\citenamefont {{Lieb}}\ and\ \citenamefont
  {{Robinson}}(1972)}]{1972CMaPh..28..251L}%
  \BibitemOpen
  \bibfield  {author} {\bibinfo {author} {\bibfnamefont {E.~H.}\ \bibnamefont
  {{Lieb}}}\ and\ \bibinfo {author} {\bibfnamefont {D.~W.}\ \bibnamefont
  {{Robinson}}},\ }\bibfield  {title} {\bibinfo {title} {{The finite group
  velocity of quantum spin systems}},\ }\href
  {https://doi.org/10.1007/BF01645779} {\bibfield  {journal} {\bibinfo
  {journal} {Communications in Mathematical Physics}\ }\textbf {\bibinfo
  {volume} {28}},\ \bibinfo {pages} {251} (\bibinfo {year} {1972})}\BibitemShut
  {NoStop}%
\bibitem [{\citenamefont {Braunstein}\ and\ \citenamefont
  {Caves}(1994)}]{PhysRevLett.72.3439}%
  \BibitemOpen
  \bibfield  {author} {\bibinfo {author} {\bibfnamefont {S.~L.}\ \bibnamefont
  {Braunstein}}\ and\ \bibinfo {author} {\bibfnamefont {C.~M.}\ \bibnamefont
  {Caves}},\ }\bibfield  {title} {\bibinfo {title} {Statistical distance and
  the geometry of quantum states},\ }\href
  {https://doi.org/10.1103/PhysRevLett.72.3439} {\bibfield  {journal} {\bibinfo
   {journal} {Phys. Rev. Lett.}\ }\textbf {\bibinfo {volume} {72}},\ \bibinfo
  {pages} {3439} (\bibinfo {year} {1994})}\BibitemShut {NoStop}%
\bibitem [{\citenamefont {Caneva}\ \emph {et~al.}(2009)\citenamefont {Caneva},
  \citenamefont {Murphy}, \citenamefont {Calarco}, \citenamefont {Fazio},
  \citenamefont {Montangero}, \citenamefont {Giovannetti},\ and\ \citenamefont
  {Santoro}}]{PhysRevLett.103.240501}%
  \BibitemOpen
  \bibfield  {author} {\bibinfo {author} {\bibfnamefont {T.}~\bibnamefont
  {Caneva}}, \bibinfo {author} {\bibfnamefont {M.}~\bibnamefont {Murphy}},
  \bibinfo {author} {\bibfnamefont {T.}~\bibnamefont {Calarco}}, \bibinfo
  {author} {\bibfnamefont {R.}~\bibnamefont {Fazio}}, \bibinfo {author}
  {\bibfnamefont {S.}~\bibnamefont {Montangero}}, \bibinfo {author}
  {\bibfnamefont {V.}~\bibnamefont {Giovannetti}},\ and\ \bibinfo {author}
  {\bibfnamefont {G.~E.}\ \bibnamefont {Santoro}},\ }\bibfield  {title}
  {\bibinfo {title} {Optimal control at the quantum speed limit},\ }\href
  {https://doi.org/10.1103/PhysRevLett.103.240501} {\bibfield  {journal}
  {\bibinfo  {journal} {Phys. Rev. Lett.}\ }\textbf {\bibinfo {volume} {103}},\
  \bibinfo {pages} {240501} (\bibinfo {year} {2009})}\BibitemShut {NoStop}%
\bibitem [{\citenamefont {Mohan}\ and\ \citenamefont
  {Pati}(2022)}]{PhysRevA.106.042436}%
  \BibitemOpen
  \bibfield  {author} {\bibinfo {author} {\bibfnamefont {B.}~\bibnamefont
  {Mohan}}\ and\ \bibinfo {author} {\bibfnamefont {A.~K.}\ \bibnamefont
  {Pati}},\ }\bibfield  {title} {\bibinfo {title} {Quantum speed limits for
  observables},\ }\href {https://doi.org/10.1103/PhysRevA.106.042436}
  {\bibfield  {journal} {\bibinfo  {journal} {Phys. Rev. A}\ }\textbf {\bibinfo
  {volume} {106}},\ \bibinfo {pages} {042436} (\bibinfo {year}
  {2022})}\BibitemShut {NoStop}%
\bibitem [{\citenamefont {Garc\'{\i}a-Pintos}\ \emph
  {et~al.}(2022)\citenamefont {Garc\'{\i}a-Pintos}, \citenamefont {Nicholson},
  \citenamefont {Green}, \citenamefont {del Campo},\ and\ \citenamefont
  {Gorshkov}}]{PhysRevX.12.011038}%
  \BibitemOpen
  \bibfield  {author} {\bibinfo {author} {\bibfnamefont {L.~P.}\ \bibnamefont
  {Garc\'{\i}a-Pintos}}, \bibinfo {author} {\bibfnamefont {S.~B.}\ \bibnamefont
  {Nicholson}}, \bibinfo {author} {\bibfnamefont {J.~R.}\ \bibnamefont
  {Green}}, \bibinfo {author} {\bibfnamefont {A.}~\bibnamefont {del Campo}},\
  and\ \bibinfo {author} {\bibfnamefont {A.~V.}\ \bibnamefont {Gorshkov}},\
  }\bibfield  {title} {\bibinfo {title} {Unifying quantum and classical speed
  limits on observables},\ }\href {https://doi.org/10.1103/PhysRevX.12.011038}
  {\bibfield  {journal} {\bibinfo  {journal} {Phys. Rev. X}\ }\textbf {\bibinfo
  {volume} {12}},\ \bibinfo {pages} {011038} (\bibinfo {year}
  {2022})}\BibitemShut {NoStop}%
\bibitem [{\citenamefont {{Bringewatt}}\ \emph {et~al.}(2024)\citenamefont
  {{Bringewatt}}, \citenamefont {{Steffen}}, \citenamefont {{Ritter}},
  \citenamefont {{Ehrenberg}}, \citenamefont {{Wang}}, \citenamefont
  {{Palmer}}, \citenamefont {{Koll{\'a}r}}, \citenamefont {{Gorshkov}},\ and\
  \citenamefont {{Garc{\'\i}a-Pintos}}}]{2024arXiv240904544B}%
  \BibitemOpen
  \bibfield  {author} {\bibinfo {author} {\bibfnamefont {J.}~\bibnamefont
  {{Bringewatt}}}, \bibinfo {author} {\bibfnamefont {Z.}~\bibnamefont
  {{Steffen}}}, \bibinfo {author} {\bibfnamefont {M.~A.}\ \bibnamefont
  {{Ritter}}}, \bibinfo {author} {\bibfnamefont {A.}~\bibnamefont
  {{Ehrenberg}}}, \bibinfo {author} {\bibfnamefont {H.}~\bibnamefont {{Wang}}},
  \bibinfo {author} {\bibfnamefont {B.~S.}\ \bibnamefont {{Palmer}}}, \bibinfo
  {author} {\bibfnamefont {A.~J.}\ \bibnamefont {{Koll{\'a}r}}}, \bibinfo
  {author} {\bibfnamefont {A.~V.}\ \bibnamefont {{Gorshkov}}},\ and\ \bibinfo
  {author} {\bibfnamefont {L.~P.}\ \bibnamefont {{Garc{\'\i}a-Pintos}}},\
  }\bibfield  {title} {\bibinfo {title} {{Generalized geometric speed limits
  for quantum observables}},\ }\href
  {https://doi.org/10.48550/arXiv.2409.04544} {\bibfield  {journal} {\bibinfo
  {journal} {arXiv e-prints}\ ,\ \bibinfo {eid} {arXiv:2409.04544}} (\bibinfo
  {year} {2024})},\ \Eprint {https://arxiv.org/abs/2409.04544}
  {arXiv:2409.04544 [quant-ph]} \BibitemShut {NoStop}%
\bibitem [{\citenamefont {Rattazzi}\ \emph {et~al.}(2008)\citenamefont
  {Rattazzi}, \citenamefont {Rychkov}, \citenamefont {Tonni},\ and\
  \citenamefont {Vichi}}]{Rattazzi:2008pe}%
  \BibitemOpen
  \bibfield  {author} {\bibinfo {author} {\bibfnamefont {R.}~\bibnamefont
  {Rattazzi}}, \bibinfo {author} {\bibfnamefont {V.~S.}\ \bibnamefont
  {Rychkov}}, \bibinfo {author} {\bibfnamefont {E.}~\bibnamefont {Tonni}},\
  and\ \bibinfo {author} {\bibfnamefont {A.}~\bibnamefont {Vichi}},\ }\bibfield
   {title} {\bibinfo {title} {{Bounding scalar operator dimensions in 4D
  CFT}},\ }\href {https://doi.org/10.1088/1126-6708/2008/12/031} {\bibfield
  {journal} {\bibinfo  {journal} {JHEP}\ }\textbf {\bibinfo {volume} {12}},\
  \bibinfo {pages} {031}},\ \Eprint {https://arxiv.org/abs/0807.0004}
  {arXiv:0807.0004 [hep-th]} \BibitemShut {NoStop}%
\bibitem [{\citenamefont {Kos}\ \emph {et~al.}(2016)\citenamefont {Kos},
  \citenamefont {Poland}, \citenamefont {Simmons-Duffin},\ and\ \citenamefont
  {Vichi}}]{Kos:2016ysd}%
  \BibitemOpen
  \bibfield  {author} {\bibinfo {author} {\bibfnamefont {F.}~\bibnamefont
  {Kos}}, \bibinfo {author} {\bibfnamefont {D.}~\bibnamefont {Poland}},
  \bibinfo {author} {\bibfnamefont {D.}~\bibnamefont {Simmons-Duffin}},\ and\
  \bibinfo {author} {\bibfnamefont {A.}~\bibnamefont {Vichi}},\ }\bibfield
  {title} {\bibinfo {title} {{Precision Islands in the Ising and $O(N)$
  Models}},\ }\href {https://doi.org/10.1007/JHEP08(2016)036} {\bibfield
  {journal} {\bibinfo  {journal} {JHEP}\ }\textbf {\bibinfo {volume} {08}},\
  \bibinfo {pages} {036}},\ \Eprint {https://arxiv.org/abs/1603.04436}
  {arXiv:1603.04436 [hep-th]} \BibitemShut {NoStop}%
\bibitem [{\citenamefont {Simmons-Duffin}(2015)}]{Simmons-Duffin:2015qma}%
  \BibitemOpen
  \bibfield  {author} {\bibinfo {author} {\bibfnamefont {D.}~\bibnamefont
  {Simmons-Duffin}},\ }\bibfield  {title} {\bibinfo {title} {{A Semidefinite
  Program Solver for the Conformal Bootstrap}},\ }\href
  {https://doi.org/10.1007/JHEP06(2015)174} {\bibfield  {journal} {\bibinfo
  {journal} {JHEP}\ }\textbf {\bibinfo {volume} {06}},\ \bibinfo {pages}
  {174}},\ \Eprint {https://arxiv.org/abs/1502.02033} {arXiv:1502.02033
  [hep-th]} \BibitemShut {NoStop}%
\bibitem [{\citenamefont {Poland}\ \emph {et~al.}(2019)\citenamefont {Poland},
  \citenamefont {Rychkov},\ and\ \citenamefont {Vichi}}]{Poland:2018epd}%
  \BibitemOpen
  \bibfield  {author} {\bibinfo {author} {\bibfnamefont {D.}~\bibnamefont
  {Poland}}, \bibinfo {author} {\bibfnamefont {S.}~\bibnamefont {Rychkov}},\
  and\ \bibinfo {author} {\bibfnamefont {A.}~\bibnamefont {Vichi}},\ }\bibfield
   {title} {\bibinfo {title} {{The Conformal Bootstrap: Theory, Numerical
  Techniques, and Applications}},\ }\href
  {https://doi.org/10.1103/RevModPhys.91.015002} {\bibfield  {journal}
  {\bibinfo  {journal} {Rev. Mod. Phys.}\ }\textbf {\bibinfo {volume} {91}},\
  \bibinfo {pages} {015002} (\bibinfo {year} {2019})},\ \Eprint
  {https://arxiv.org/abs/1805.04405} {arXiv:1805.04405 [hep-th]} \BibitemShut
  {NoStop}%
\bibitem [{\citenamefont {Chew}(1961)}]{chew1961s}%
  \BibitemOpen
  \bibfield  {author} {\bibinfo {author} {\bibfnamefont {G.~F.}\ \bibnamefont
  {Chew}},\ }\href@noop {} {\emph {\bibinfo {title} {The S-matrix theory of
  strong interactions}}}\ (\bibinfo {year} {1961})\BibitemShut {NoStop}%
\bibitem [{\citenamefont {Paulos}\ \emph
  {et~al.}(2017{\natexlab{a}})\citenamefont {Paulos}, \citenamefont
  {Penedones}, \citenamefont {Toledo}, \citenamefont {van Rees},\ and\
  \citenamefont {Vieira}}]{Paulos:2016but}%
  \BibitemOpen
  \bibfield  {author} {\bibinfo {author} {\bibfnamefont {M.~F.}\ \bibnamefont
  {Paulos}}, \bibinfo {author} {\bibfnamefont {J.}~\bibnamefont {Penedones}},
  \bibinfo {author} {\bibfnamefont {J.}~\bibnamefont {Toledo}}, \bibinfo
  {author} {\bibfnamefont {B.~C.}\ \bibnamefont {van Rees}},\ and\ \bibinfo
  {author} {\bibfnamefont {P.}~\bibnamefont {Vieira}},\ }\bibfield  {title}
  {\bibinfo {title} {{The S-matrix bootstrap II: two dimensional amplitudes}},\
  }\href {https://doi.org/10.1007/JHEP11(2017)143} {\bibfield  {journal}
  {\bibinfo  {journal} {JHEP}\ }\textbf {\bibinfo {volume} {11}},\ \bibinfo
  {pages} {143}},\ \Eprint {https://arxiv.org/abs/1607.06110} {arXiv:1607.06110
  [hep-th]} \BibitemShut {NoStop}%
\bibitem [{\citenamefont {Paulos}\ \emph
  {et~al.}(2017{\natexlab{b}})\citenamefont {Paulos}, \citenamefont
  {Penedones}, \citenamefont {Toledo}, \citenamefont {van Rees},\ and\
  \citenamefont {Vieira}}]{Paulos:2016fap}%
  \BibitemOpen
  \bibfield  {author} {\bibinfo {author} {\bibfnamefont {M.~F.}\ \bibnamefont
  {Paulos}}, \bibinfo {author} {\bibfnamefont {J.}~\bibnamefont {Penedones}},
  \bibinfo {author} {\bibfnamefont {J.}~\bibnamefont {Toledo}}, \bibinfo
  {author} {\bibfnamefont {B.~C.}\ \bibnamefont {van Rees}},\ and\ \bibinfo
  {author} {\bibfnamefont {P.}~\bibnamefont {Vieira}},\ }\bibfield  {title}
  {\bibinfo {title} {{The S-matrix bootstrap. Part I: QFT in AdS}},\ }\href
  {https://doi.org/10.1007/JHEP11(2017)133} {\bibfield  {journal} {\bibinfo
  {journal} {JHEP}\ }\textbf {\bibinfo {volume} {11}},\ \bibinfo {pages}
  {133}},\ \Eprint {https://arxiv.org/abs/1607.06109} {arXiv:1607.06109
  [hep-th]} \BibitemShut {NoStop}%
\bibitem [{\citenamefont {Paulos}\ \emph {et~al.}(2019)\citenamefont {Paulos},
  \citenamefont {Penedones}, \citenamefont {Toledo}, \citenamefont {van Rees},\
  and\ \citenamefont {Vieira}}]{Paulos:2017fhb}%
  \BibitemOpen
  \bibfield  {author} {\bibinfo {author} {\bibfnamefont {M.~F.}\ \bibnamefont
  {Paulos}}, \bibinfo {author} {\bibfnamefont {J.}~\bibnamefont {Penedones}},
  \bibinfo {author} {\bibfnamefont {J.}~\bibnamefont {Toledo}}, \bibinfo
  {author} {\bibfnamefont {B.~C.}\ \bibnamefont {van Rees}},\ and\ \bibinfo
  {author} {\bibfnamefont {P.}~\bibnamefont {Vieira}},\ }\bibfield  {title}
  {\bibinfo {title} {{The S-matrix bootstrap. Part III: higher dimensional
  amplitudes}},\ }\href {https://doi.org/10.1007/JHEP12(2019)040} {\bibfield
  {journal} {\bibinfo  {journal} {JHEP}\ }\textbf {\bibinfo {volume} {12}},\
  \bibinfo {pages} {040}},\ \Eprint {https://arxiv.org/abs/1708.06765}
  {arXiv:1708.06765 [hep-th]} \BibitemShut {NoStop}%
\bibitem [{\citenamefont {Caron-Huot}\ and\ \citenamefont
  {Van~Duong}(2021)}]{Caron-Huot:2020cmc}%
  \BibitemOpen
  \bibfield  {author} {\bibinfo {author} {\bibfnamefont {S.}~\bibnamefont
  {Caron-Huot}}\ and\ \bibinfo {author} {\bibfnamefont {V.}~\bibnamefont
  {Van~Duong}},\ }\bibfield  {title} {\bibinfo {title} {{Extremal Effective
  Field Theories}},\ }\href {https://doi.org/10.1007/JHEP05(2021)280}
  {\bibfield  {journal} {\bibinfo  {journal} {JHEP}\ }\textbf {\bibinfo
  {volume} {05}},\ \bibinfo {pages} {280}},\ \Eprint
  {https://arxiv.org/abs/2011.02957} {arXiv:2011.02957 [hep-th]} \BibitemShut
  {NoStop}%
\bibitem [{\citenamefont {Tolley}\ \emph {et~al.}(2021)\citenamefont {Tolley},
  \citenamefont {Wang},\ and\ \citenamefont {Zhou}}]{Tolley:2020gtv}%
  \BibitemOpen
  \bibfield  {author} {\bibinfo {author} {\bibfnamefont {A.~J.}\ \bibnamefont
  {Tolley}}, \bibinfo {author} {\bibfnamefont {Z.-Y.}\ \bibnamefont {Wang}},\
  and\ \bibinfo {author} {\bibfnamefont {S.-Y.}\ \bibnamefont {Zhou}},\
  }\bibfield  {title} {\bibinfo {title} {{New positivity bounds from full
  crossing symmetry}},\ }\href {https://doi.org/10.1007/JHEP05(2021)255}
  {\bibfield  {journal} {\bibinfo  {journal} {JHEP}\ }\textbf {\bibinfo
  {volume} {05}},\ \bibinfo {pages} {255}},\ \Eprint
  {https://arxiv.org/abs/2011.02400} {arXiv:2011.02400 [hep-th]} \BibitemShut
  {NoStop}%
\bibitem [{\citenamefont {Sinha}\ and\ \citenamefont
  {Zahed}(2021)}]{Sinha:2020win}%
  \BibitemOpen
  \bibfield  {author} {\bibinfo {author} {\bibfnamefont {A.}~\bibnamefont
  {Sinha}}\ and\ \bibinfo {author} {\bibfnamefont {A.}~\bibnamefont {Zahed}},\
  }\bibfield  {title} {\bibinfo {title} {{Crossing symmetric dispersion
  relations in quantum field theories}},\ }\href
  {https://doi.org/10.1103/PhysRevLett.126.181601} {\bibfield  {journal}
  {\bibinfo  {journal} {Phys. Rev. Lett.}\ }\textbf {\bibinfo {volume} {126}},\
  \bibinfo {pages} {181601} (\bibinfo {year} {2021})},\ \Eprint
  {https://arxiv.org/abs/2012.04877} {arXiv:2012.04877 [hep-th]} \BibitemShut
  {NoStop}%
\bibitem [{\citenamefont {Arkani-Hamed}\ \emph {et~al.}(2021)\citenamefont
  {Arkani-Hamed}, \citenamefont {Huang},\ and\ \citenamefont
  {Huang}}]{Arkani-Hamed:2020blm}%
  \BibitemOpen
  \bibfield  {author} {\bibinfo {author} {\bibfnamefont {N.}~\bibnamefont
  {Arkani-Hamed}}, \bibinfo {author} {\bibfnamefont {T.-C.}\ \bibnamefont
  {Huang}},\ and\ \bibinfo {author} {\bibfnamefont {Y.-T.}\ \bibnamefont
  {Huang}},\ }\bibfield  {title} {\bibinfo {title} {{The EFT-hedron}},\ }\href
  {https://doi.org/10.1007/JHEP05(2021)259} {\bibfield  {journal} {\bibinfo
  {journal} {JHEP}\ }\textbf {\bibinfo {volume} {05}},\ \bibinfo {pages}
  {259}},\ \Eprint {https://arxiv.org/abs/2012.15849} {arXiv:2012.15849
  [hep-th]} \BibitemShut {NoStop}%
\bibitem [{\citenamefont {Lin}(2020)}]{Lin:2020mme}%
  \BibitemOpen
  \bibfield  {author} {\bibinfo {author} {\bibfnamefont {H.~W.}\ \bibnamefont
  {Lin}},\ }\bibfield  {title} {\bibinfo {title} {{Bootstraps to strings:
  solving random matrix models with positivity}},\ }\href
  {https://doi.org/10.1007/JHEP06(2020)090} {\bibfield  {journal} {\bibinfo
  {journal} {JHEP}\ }\textbf {\bibinfo {volume} {06}},\ \bibinfo {pages}
  {090}},\ \Eprint {https://arxiv.org/abs/2002.08387} {arXiv:2002.08387
  [hep-th]} \BibitemShut {NoStop}%
\bibitem [{\citenamefont {Kazakov}\ and\ \citenamefont
  {Zheng}(2022)}]{Kazakov:2021lel}%
  \BibitemOpen
  \bibfield  {author} {\bibinfo {author} {\bibfnamefont {V.}~\bibnamefont
  {Kazakov}}\ and\ \bibinfo {author} {\bibfnamefont {Z.}~\bibnamefont
  {Zheng}},\ }\bibfield  {title} {\bibinfo {title} {{Analytic and numerical
  bootstrap for one-matrix model and
  \textquotedblleft{}unsolvable\textquotedblright{} two-matrix model}},\ }\href
  {https://doi.org/10.1007/JHEP06(2022)030} {\bibfield  {journal} {\bibinfo
  {journal} {JHEP}\ }\textbf {\bibinfo {volume} {06}},\ \bibinfo {pages}
  {030}},\ \Eprint {https://arxiv.org/abs/2108.04830} {arXiv:2108.04830
  [hep-th]} \BibitemShut {NoStop}%
\bibitem [{\citenamefont {Berenstein}\ and\ \citenamefont
  {Hulsey}(2021)}]{Berenstein:2021dyf}%
  \BibitemOpen
  \bibfield  {author} {\bibinfo {author} {\bibfnamefont {D.}~\bibnamefont
  {Berenstein}}\ and\ \bibinfo {author} {\bibfnamefont {G.}~\bibnamefont
  {Hulsey}},\ }\href@noop {} {\bibinfo {title} {{Bootstrapping Simple QM
  Systems}}} (\bibinfo {year} {2021}),\ \Eprint
  {https://arxiv.org/abs/2108.08757} {arXiv:2108.08757 [hep-th]} \BibitemShut
  {NoStop}%
\bibitem [{\citenamefont {Berenstein}\ and\ \citenamefont
  {Hulsey}(2022)}]{Berenstein:2021loy}%
  \BibitemOpen
  \bibfield  {author} {\bibinfo {author} {\bibfnamefont {D.}~\bibnamefont
  {Berenstein}}\ and\ \bibinfo {author} {\bibfnamefont {G.}~\bibnamefont
  {Hulsey}},\ }\bibfield  {title} {\bibinfo {title} {{Bootstrapping more QM
  systems}},\ }\href {https://doi.org/10.1088/1751-8121/ac7118} {\bibfield
  {journal} {\bibinfo  {journal} {J. Phys. A}\ }\textbf {\bibinfo {volume}
  {55}},\ \bibinfo {pages} {275304} (\bibinfo {year} {2022})},\ \Eprint
  {https://arxiv.org/abs/2109.06251} {arXiv:2109.06251 [hep-th]} \BibitemShut
  {NoStop}%
\bibitem [{\citenamefont {Nancarrow}\ and\ \citenamefont
  {Xin}(2023)}]{Nancarrow:2022wdr}%
  \BibitemOpen
  \bibfield  {author} {\bibinfo {author} {\bibfnamefont {C.~O.}\ \bibnamefont
  {Nancarrow}}\ and\ \bibinfo {author} {\bibfnamefont {Y.}~\bibnamefont
  {Xin}},\ }\bibfield  {title} {\bibinfo {title} {{Bootstrapping the gap in
  quantum spin systems}},\ }\href {https://doi.org/10.1007/JHEP08(2023)052}
  {\bibfield  {journal} {\bibinfo  {journal} {JHEP}\ }\textbf {\bibinfo
  {volume} {08}},\ \bibinfo {pages} {052}},\ \Eprint
  {https://arxiv.org/abs/2211.03819} {arXiv:2211.03819 [hep-th]} \BibitemShut
  {NoStop}%
\bibitem [{\citenamefont {Han}\ \emph {et~al.}(2020)\citenamefont {Han},
  \citenamefont {Hartnoll},\ and\ \citenamefont {Kruthoff}}]{Han:2020bkb}%
  \BibitemOpen
  \bibfield  {author} {\bibinfo {author} {\bibfnamefont {X.}~\bibnamefont
  {Han}}, \bibinfo {author} {\bibfnamefont {S.~A.}\ \bibnamefont {Hartnoll}},\
  and\ \bibinfo {author} {\bibfnamefont {J.}~\bibnamefont {Kruthoff}},\
  }\bibfield  {title} {\bibinfo {title} {{Bootstrapping Matrix Quantum
  Mechanics}},\ }\href {https://doi.org/10.1103/PhysRevLett.125.041601}
  {\bibfield  {journal} {\bibinfo  {journal} {Phys. Rev. Lett.}\ }\textbf
  {\bibinfo {volume} {125}},\ \bibinfo {pages} {041601} (\bibinfo {year}
  {2020})},\ \Eprint {https://arxiv.org/abs/2004.10212} {arXiv:2004.10212
  [hep-th]} \BibitemShut {NoStop}%
\bibitem [{\citenamefont {Lin}(2023)}]{Lin:2023owt}%
  \BibitemOpen
  \bibfield  {author} {\bibinfo {author} {\bibfnamefont {H.~W.}\ \bibnamefont
  {Lin}},\ }\bibfield  {title} {\bibinfo {title} {{Bootstrap bounds on D0-brane
  quantum mechanics}},\ }\href {https://doi.org/10.1007/JHEP06(2023)038}
  {\bibfield  {journal} {\bibinfo  {journal} {JHEP}\ }\textbf {\bibinfo
  {volume} {06}},\ \bibinfo {pages} {038}},\ \Eprint
  {https://arxiv.org/abs/2302.04416} {arXiv:2302.04416 [hep-th]} \BibitemShut
  {NoStop}%
\bibitem [{\citenamefont {Lin}\ and\ \citenamefont
  {Zheng}(2024)}]{Lin:2024vvg}%
  \BibitemOpen
  \bibfield  {author} {\bibinfo {author} {\bibfnamefont {H.~W.}\ \bibnamefont
  {Lin}}\ and\ \bibinfo {author} {\bibfnamefont {Z.}~\bibnamefont {Zheng}},\
  }\href@noop {} {\bibinfo {title} {{Bootstrapping Ground State Correlators in
  Matrix Theory, Part I}}} (\bibinfo {year} {2024}),\ \Eprint
  {https://arxiv.org/abs/2410.14647} {arXiv:2410.14647 [hep-th]} \BibitemShut
  {NoStop}%
\bibitem [{\citenamefont {Barthel}\ and\ \citenamefont
  {H\"ubener}(2012)}]{PhysRevLett.108.200404}%
  \BibitemOpen
  \bibfield  {author} {\bibinfo {author} {\bibfnamefont {T.}~\bibnamefont
  {Barthel}}\ and\ \bibinfo {author} {\bibfnamefont {R.}~\bibnamefont
  {H\"ubener}},\ }\bibfield  {title} {\bibinfo {title} {Solving
  condensed-matter ground-state problems by semidefinite relaxations},\ }\href
  {https://doi.org/10.1103/PhysRevLett.108.200404} {\bibfield  {journal}
  {\bibinfo  {journal} {Phys. Rev. Lett.}\ }\textbf {\bibinfo {volume} {108}},\
  \bibinfo {pages} {200404} (\bibinfo {year} {2012})}\BibitemShut {NoStop}%
\bibitem [{\citenamefont {Anderson}\ and\ \citenamefont
  {Kruczenski}(2017)}]{Anderson:2016rcw}%
  \BibitemOpen
  \bibfield  {author} {\bibinfo {author} {\bibfnamefont {P.~D.}\ \bibnamefont
  {Anderson}}\ and\ \bibinfo {author} {\bibfnamefont {M.}~\bibnamefont
  {Kruczenski}},\ }\bibfield  {title} {\bibinfo {title} {{Loop Equations and
  bootstrap methods in the lattice}},\ }\href
  {https://doi.org/10.1016/j.nuclphysb.2017.06.009} {\bibfield  {journal}
  {\bibinfo  {journal} {Nucl. Phys. B}\ }\textbf {\bibinfo {volume} {921}},\
  \bibinfo {pages} {702} (\bibinfo {year} {2017})},\ \Eprint
  {https://arxiv.org/abs/1612.08140} {arXiv:1612.08140 [hep-th]} \BibitemShut
  {NoStop}%
\bibitem [{\citenamefont {Lawrence}(2021)}]{Lawrence:2021msm}%
  \BibitemOpen
  \bibfield  {author} {\bibinfo {author} {\bibfnamefont {S.}~\bibnamefont
  {Lawrence}},\ }\href@noop {} {\bibinfo {title} {{Bootstrapping Lattice
  Vacua}}} (\bibinfo {year} {2021}),\ \Eprint
  {https://arxiv.org/abs/2111.13007} {arXiv:2111.13007 [hep-lat]} \BibitemShut
  {NoStop}%
\bibitem [{\citenamefont {Lawrence}(2023)}]{Lawrence:2022vsb}%
  \BibitemOpen
  \bibfield  {author} {\bibinfo {author} {\bibfnamefont {S.}~\bibnamefont
  {Lawrence}},\ }\bibfield  {title} {\bibinfo {title} {{Semidefinite programs
  at finite fermion density}},\ }\href
  {https://doi.org/10.1103/PhysRevD.107.094511} {\bibfield  {journal} {\bibinfo
   {journal} {Phys. Rev. D}\ }\textbf {\bibinfo {volume} {107}},\ \bibinfo
  {pages} {094511} (\bibinfo {year} {2023})},\ \Eprint
  {https://arxiv.org/abs/2211.08874} {arXiv:2211.08874 [hep-lat]} \BibitemShut
  {NoStop}%
\bibitem [{\citenamefont {Kazakov}\ and\ \citenamefont
  {Zheng}(2023)}]{Kazakov:2022xuh}%
  \BibitemOpen
  \bibfield  {author} {\bibinfo {author} {\bibfnamefont {V.}~\bibnamefont
  {Kazakov}}\ and\ \bibinfo {author} {\bibfnamefont {Z.}~\bibnamefont
  {Zheng}},\ }\bibfield  {title} {\bibinfo {title} {{Bootstrap for lattice
  Yang-Mills theory}},\ }\href {https://doi.org/10.1103/PhysRevD.107.L051501}
  {\bibfield  {journal} {\bibinfo  {journal} {Phys. Rev. D}\ }\textbf {\bibinfo
  {volume} {107}},\ \bibinfo {pages} {L051501} (\bibinfo {year} {2023})},\
  \Eprint {https://arxiv.org/abs/2203.11360} {arXiv:2203.11360 [hep-th]}
  \BibitemShut {NoStop}%
\bibitem [{\citenamefont {Cho}\ \emph {et~al.}(2022)\citenamefont {Cho},
  \citenamefont {Gabai}, \citenamefont {Lin}, \citenamefont {Rodriguez},
  \citenamefont {Sandor},\ and\ \citenamefont {Yin}}]{Cho:2022lcj}%
  \BibitemOpen
  \bibfield  {author} {\bibinfo {author} {\bibfnamefont {M.}~\bibnamefont
  {Cho}}, \bibinfo {author} {\bibfnamefont {B.}~\bibnamefont {Gabai}}, \bibinfo
  {author} {\bibfnamefont {Y.-H.}\ \bibnamefont {Lin}}, \bibinfo {author}
  {\bibfnamefont {V.~A.}\ \bibnamefont {Rodriguez}}, \bibinfo {author}
  {\bibfnamefont {J.}~\bibnamefont {Sandor}},\ and\ \bibinfo {author}
  {\bibfnamefont {X.}~\bibnamefont {Yin}},\ }\href@noop {} {\bibinfo {title}
  {{Bootstrapping the Ising Model on the Lattice}}} (\bibinfo {year} {2022}),\
  \Eprint {https://arxiv.org/abs/2206.12538} {arXiv:2206.12538 [hep-th]}
  \BibitemShut {NoStop}%
\bibitem [{\citenamefont {Cho}\ and\ \citenamefont {Sun}(2023)}]{Cho:2023ulr}%
  \BibitemOpen
  \bibfield  {author} {\bibinfo {author} {\bibfnamefont {M.}~\bibnamefont
  {Cho}}\ and\ \bibinfo {author} {\bibfnamefont {X.}~\bibnamefont {Sun}},\
  }\bibfield  {title} {\bibinfo {title} {{Bootstrap, Markov Chain Monte Carlo,
  and LP/SDP hierarchy for the lattice Ising model}},\ }\href
  {https://doi.org/10.1007/JHEP11(2023)047} {\bibfield  {journal} {\bibinfo
  {journal} {JHEP}\ }\textbf {\bibinfo {volume} {11}},\ \bibinfo {pages}
  {047}},\ \Eprint {https://arxiv.org/abs/2309.01016} {arXiv:2309.01016
  [hep-th]} \BibitemShut {NoStop}%
\bibitem [{\citenamefont {Kazakov}\ and\ \citenamefont
  {Zheng}(2024)}]{Kazakov:2024ool}%
  \BibitemOpen
  \bibfield  {author} {\bibinfo {author} {\bibfnamefont {V.}~\bibnamefont
  {Kazakov}}\ and\ \bibinfo {author} {\bibfnamefont {Z.}~\bibnamefont
  {Zheng}},\ }\href@noop {} {\bibinfo {title} {{Bootstrap for Finite N Lattice
  Yang-Mills Theory}}} (\bibinfo {year} {2024}),\ \Eprint
  {https://arxiv.org/abs/2404.16925} {arXiv:2404.16925 [hep-th]} \BibitemShut
  {NoStop}%
\bibitem [{\citenamefont {Li}\ and\ \citenamefont {Zhou}(2024)}]{Li:2024wrd}%
  \BibitemOpen
  \bibfield  {author} {\bibinfo {author} {\bibfnamefont {Z.}~\bibnamefont
  {Li}}\ and\ \bibinfo {author} {\bibfnamefont {S.}~\bibnamefont {Zhou}},\
  }\bibfield  {title} {\bibinfo {title} {{Bootstrapping the Abelian lattice
  gauge theories}},\ }\href {https://doi.org/10.1007/JHEP08(2024)154}
  {\bibfield  {journal} {\bibinfo  {journal} {JHEP}\ }\textbf {\bibinfo
  {volume} {08}},\ \bibinfo {pages} {154}},\ \Eprint
  {https://arxiv.org/abs/2404.17071} {arXiv:2404.17071 [hep-th]} \BibitemShut
  {NoStop}%
\bibitem [{\citenamefont {Fawzi}\ \emph {et~al.}(2024)\citenamefont {Fawzi},
  \citenamefont {Fawzi},\ and\ \citenamefont {Scalet}}]{Fawzi:2023fpg}%
  \BibitemOpen
  \bibfield  {author} {\bibinfo {author} {\bibfnamefont {H.}~\bibnamefont
  {Fawzi}}, \bibinfo {author} {\bibfnamefont {O.}~\bibnamefont {Fawzi}},\ and\
  \bibinfo {author} {\bibfnamefont {S.~O.}\ \bibnamefont {Scalet}},\ }\bibfield
   {title} {\bibinfo {title} {{Certified algorithms for equilibrium states of
  local quantum Hamiltonians}},\ }\href
  {https://doi.org/10.1038/s41467-024-51592-3} {\bibfield  {journal} {\bibinfo
  {journal} {Nature Commun.}\ }\textbf {\bibinfo {volume} {15}},\ \bibinfo
  {pages} {7394} (\bibinfo {year} {2024})},\ \Eprint
  {https://arxiv.org/abs/2311.18706} {arXiv:2311.18706 [quant-ph]} \BibitemShut
  {NoStop}%
\bibitem [{\citenamefont {Cho}\ \emph {et~al.}(2024)\citenamefont {Cho},
  \citenamefont {Gabai}, \citenamefont {Sandor},\ and\ \citenamefont
  {Yin}}]{Cho:2024kxn}%
  \BibitemOpen
  \bibfield  {author} {\bibinfo {author} {\bibfnamefont {M.}~\bibnamefont
  {Cho}}, \bibinfo {author} {\bibfnamefont {B.}~\bibnamefont {Gabai}}, \bibinfo
  {author} {\bibfnamefont {J.}~\bibnamefont {Sandor}},\ and\ \bibinfo {author}
  {\bibfnamefont {X.}~\bibnamefont {Yin}},\ }\href@noop {} {\bibinfo {title}
  {{Thermal Bootstrap of Matrix Quantum Mechanics}}} (\bibinfo {year} {2024}),\
  \Eprint {https://arxiv.org/abs/2410.04262} {arXiv:2410.04262 [hep-th]}
  \BibitemShut {NoStop}%
\bibitem [{\citenamefont {Heller}\ \emph {et~al.}(2023)\citenamefont {Heller},
  \citenamefont {Serantes}, \citenamefont {Spali{\'n}ski},\ and\ \citenamefont
  {Withers}}]{heller2023hydrohedron}%
  \BibitemOpen
  \bibfield  {author} {\bibinfo {author} {\bibfnamefont {M.~P.}\ \bibnamefont
  {Heller}}, \bibinfo {author} {\bibfnamefont {A.}~\bibnamefont {Serantes}},
  \bibinfo {author} {\bibfnamefont {M.}~\bibnamefont {Spali{\'n}ski}},\ and\
  \bibinfo {author} {\bibfnamefont {B.}~\bibnamefont {Withers}},\ }\bibfield
  {title} {\bibinfo {title} {The hydrohedron: Bootstrapping relativistic
  hydrodynamics},\ }\href@noop {} {\bibfield  {journal} {\bibinfo  {journal}
  {arXiv preprint arXiv:2305.07703}\ } (\bibinfo {year} {2023})}\BibitemShut
  {NoStop}%
\bibitem [{\citenamefont {Lawrence}(2024)}]{Lawrence:2024hjm}%
  \BibitemOpen
  \bibfield  {author} {\bibinfo {author} {\bibfnamefont {S.}~\bibnamefont
  {Lawrence}},\ }\href@noop {} {\bibinfo {title} {{Model-free spectral
  reconstruction via Lagrange duality}}} (\bibinfo {year} {2024}),\ \Eprint
  {https://arxiv.org/abs/2408.11766} {arXiv:2408.11766 [hep-lat]} \BibitemShut
  {NoStop}%
\bibitem [{\citenamefont {Slater}(1959)}]{RePEc:cwl:cwldpp:80}%
  \BibitemOpen
  \bibfield  {author} {\bibinfo {author} {\bibfnamefont {M.}~\bibnamefont
  {Slater}},\ }\href {https://EconPapers.repec.org/RePEc:cwl:cwldpp:80} {\emph
  {\bibinfo {title} {Lagrange Multipliers Revisited}}},\ \bibinfo {type}
  {Cowles Foundation Discussion Papers}\ \bibinfo {number} {80}\ (\bibinfo
  {institution} {Cowles Foundation for Research in Economics, Yale
  University},\ \bibinfo {year} {1959})\BibitemShut {NoStop}%
\bibitem [{\citenamefont {Boyd}\ and\ \citenamefont
  {Vandenberghe}(2004)}]{boyd2004convex}%
  \BibitemOpen
  \bibfield  {author} {\bibinfo {author} {\bibfnamefont {S.~P.}\ \bibnamefont
  {Boyd}}\ and\ \bibinfo {author} {\bibfnamefont {L.}~\bibnamefont
  {Vandenberghe}},\ }\href@noop {} {\emph {\bibinfo {title} {Convex
  optimization}}}\ (\bibinfo  {publisher} {Cambridge university press},\
  \bibinfo {year} {2004})\BibitemShut {NoStop}%
\bibitem [{cod()}]{code}%
  \BibitemOpen
  \href@noop {} {}\bibinfo {howpublished}
  {https://github.com/LANL/CONCAVE}\BibitemShut {NoStop}%
\end{thebibliography}%
\end{document}